% Formatting ---- 
\documentclass[onecolumn,draftclsnofoot,12pt]{IEEEtran}
%\documentclass[journal,twocolumn]{IEEEtran}
%\linespread{1.5}
\usepackage{amsfonts}
\usepackage{times}
\usepackage{graphicx}
%\usepackage[pdftex]{graphicx}
%\DeclareGraphicsExtensions{.jpg}
%\usepackage[dvips]{graphicx}
%\DeclareGraphicsExtensions{.eps}
\usepackage{latexsym}
\usepackage{dsfont}
\usepackage{amssymb}
\usepackage{amsmath}
\usepackage{cite}
\usepackage{verbatim}
\usepackage{subfigure}

\newtheorem{assumption}{Assumption}

\newtheorem{lemma}{Lemma}

\newtheorem{proposition}{Proposition}

\newenvironment{proof}{ \textbf{Proof:} }{ \hfill $\Box$}

\newcommand{\figref}[1]{{Fig.}~\ref{#1}}

\newcommand{\fig}[1]{Fig.\ \ref{#1}}

% blackboard lowercase

\def\bb0{{\mathbb{0}}}

% Bold lowercase

\def\ba{{\mathbf{a}}}
\def\bb{{\mathbf{b}}}

\def\bff{{\mathbf{f}}}
\def\bg{{\mathbf{g}}}
\def\bh{{\mathbf{h}}}

\def\bm{{\mathbf{m}}}

\def\bv{{\mathbf{v}}}

\def\bx{{\mathbf{x}}}

\def\b0{{\mathbf{0}}}

% Bold capital letters
\def\bA{{\mathbf{A}}}

\def\bF{{\mathbf{F}}}

\def\bI{{\mathbf{I}}}

\def\bR{{\mathbf{R}}}

% Blackboard capital letters

\def\bbE{{\mathbb{E}}}

% Caligraphic capital letters
\def\cA{\mathcal{A}}

\def\cF{\mathcal{F}}

\def\cN{\mathcal{N}}

% Sans serif capital letters

% sans serif lowercase

\def\sf0{{\mathsf{0}}}

% Added by Takao

% Chapter Numbers

%%%%%%%%%%%%%%%%%%%%%%%%%%%%%%%%%%%%%%
%% Robert's Notation Commands - Just temporary
 % not bold

%%%% Antonio's add %%%%%%%%%%%%%%%%%%%

%%%%%%%%%%%%%%%%%%%%%%%%%%%%%%%%%%%%%%

%\renewcommand{\headrulewidth}{0.5pt}
%\renewcommand{\footrulewidth}{0pt}
%\newcommand{\helv}{%
%    \fontfamily{phv}\fontseries{b}\fontsize{9}{11}\selectfont}
%\fancyhf{}
%\fancyhead[LE,RO]{\helv  \thepage}

% Following should be included in each chapter's base.tex with
% appropriate chapter title.

%\fancyhead[RE]{\helv  \bfseries Advanced MIMO Communication}
%\fancyhead[LO]{\helv  \bfseries CHAPTER X. Spatial Multiplexing}

%\allowdisplaybreaks

% Packages
\usepackage{epstopdf}
\usepackage{enumerate}
\usepackage{algorithmicx}
\usepackage{algorithm}
\usepackage{amsmath}
\usepackage[noend]{algpseudocode}
\usepackage{float}
\usepackage{color}
\usepackage{makeidx}
\usepackage{bbm}
\usepackage{graphicx}
\usepackage{url}

% Definitions and new commands
\newcommand{\sref}[1]{{Section}~\ref{#1}}

 	% Moore-Penrose pseudo-inverse

\DeclareMathOperator*{\argmax}{arg\,max}

\begin{document}
\title{Deep Learning Coordinated Beamforming for Highly-Mobile  Millimeter Wave Systems}
\author{ Ahmed Alkhateeb, Sam Alex, Paul  Varkey, Ying Li, Qi Qu, and Djordje Tujkovic\\ \textit{Facebook}  \thanks{This work was done while the first author was with Facebook. Ahmed Alkhateeb is currently with Arizona State University (Email: alkhateeb@asu.edu).  Sam Alex, Paul  Varkey, Ying Li, Qi Qu, and Djordje Tujkovic are with Facebook, Inc., (Email: sampalex, paulvarkey, yingli, qqu, djordjet@fb.com).}}
\maketitle

\begin{abstract}

Supporting high mobility in millimeter wave (mmWave) systems enables a wide range of important
applications such as vehicular communications and wireless virtual/augmented reality. Realizing this
in practice, though, requires overcoming several challenges. First, the use of narrow beams and the
sensitivity of mmWave signals to blockage greatly impact the coverage and reliability of highly-mobile
links. Second, highly-mobile users in dense mmWave deployments need to frequently hand-off between
base stations (BSs), which is associated with critical control and latency overhead. Further, identifying
the optimal beamforming vectors in large antenna array mmWave systems requires considerable training
overhead, which significantly affects the efficiency of these mobile systems. In this paper, a novel
integrated machine learning and coordinated beamforming solution is developed to overcome these
challenges and enable highly-mobile mmWave applications. In the proposed solution, a number of
distributed yet coordinating BSs simultaneously serve a mobile user. This user ideally needs to transmit
only one uplink training pilot sequence that will be jointly received at the coordinating BSs using omni or
quasi-omni beam patterns. These received signals draw a defining signature not only for the user location,
but also for its interaction with the surrounding environment. The developed solution then leverages
a deep learning model that learns how to use these signatures to predict the beamforming vectors at
the BSs. This renders a comprehensive solution that supports highly-mobile mmWave applications with
reliable coverage, low latency, and negligible training overhead. Extensive simulation results, based on
accurate ray-tracing, show that the proposed deep-learning coordinated beamforming strategy approaches
the achievable rate of the genie-aided solution that knows the optimal beamforming vectors with no
training overhead, and attains higher rates compared to traditional mmWave beamforming techniques.
	
\end{abstract}

\newpage
%%%%%%%%%%%%%%%%%%%%%%%%%%%%%%%%%%%%%%%%%%%%%%%%%%%%%%%%%%%
\section{Introduction} \label{sec:Intro}

Millimeter wave (mmWave) communication attracted considerable interest in the last few years, thanks to the high data rates enabled by its large available bandwidth \cite{HeathJr2016,Andrews2017,Rappaport2013a}. This makes mmWave a key technology for next-generation wireless systems \cite{Boccardi2014,Roh2014,Hur2016,11ad}. Most of the prior research has focused on developing beamforming strategies \cite{ElAyach2014,Alkhateeb2014d,Molisch2017}, evaluating the performance \cite{Bai2014,Singh2015,Zhu2014}, or studying the practical feasibility of mmWave communication at fixed or low-mobility wireless systems \cite{Cudak2014,Ghosh2014,Hong2014}. 
But can mmWave also support highly-mobile yet data-hungry applications, such as  vehicular communications or wireless augmented/virtual reality (AR/VR)? Enabling these applications faces several critical challenges:  
(i) the sensitivity of mmWave signal propagation to blockages and the large signal-to-noise ratio (SNR) differences between line-of-sight (LOS) and non-LOS links severely impact the \textbf{reliability} of the mobile systems, 
(ii) with mobility, and in dense deployments, the user needs to frequently hand over from one base station (BS) to another, which imposes control overhead and introduces a \textbf{latency} problem, 
and (iii) with large arrays, adjusting the beamforming vectors requires \textbf{large training overhead}, which imposes a fundamental limit on supporting mobile users. 
In this paper, we develop a novel solution based on coordinated beamforming, and leveraging tools from machine learning, to jointly address all these challenges and enable highly-mobile mmWave systems.

\subsection{Prior Work}
Coordinating the transmission between multiple BSs to simultaneously serve the same user is one main solution for enhancing the coverage and overcoming the frequent handover problems in dense mmWave systems \cite{Jr.2017,Maamari2016,Gupta2018}. 
In \cite{Jr.2017}, extensive measurements for 73 GHz coordinated multi-point transmission were done at an urban open square scenario in downtown Brooklyn. The measurements showed that serving a user simultaneously by a number of BSs achieves significant coverage improvement.
Analyzing the network coverage of coordinated mmWave beamforming was also addressed in prior work \cite{Maamari2016,Gupta2018}, mainly using tools from stochastic geometry. 
In \cite{Maamari2016}, the performance of heterogeneous mmWave cellular networks was analyzed to show that a considerable coverage gain can be achieved using base station cooperation, where the user is simultaneously served by multiple BSs.  
In \cite{Gupta2018}, a setup where the user is only connected to LOS BSs was considered and the probability of having at least one LOS BS was analyzed. The results showed that the density of coordinating BSs should scale with the square of the blockage density to maintain the same LOS connectivity. 
While \cite{Jr.2017,Maamari2016,Gupta2018} proved the significant coverage gain of BS coordination, they did not investigate how to construct these coordinated beamforming vectors, which is normally associated with high coordination overhead. This paper, therefore, targets developing low-complexity mmWave coordination strategies that harness the coordination coverage and latency gains but with low coordination overhead.  

The other major challenge with highly-mobile mmWave systems is the huge training overhead associated with adjusting large array beamforming vectors. Developing beamforming/channel estimation solutions to reduce this training overhead has attracted considerable research interest in the last few years \cite{Wang2009,Hur2013,Noh2017,Donno2017,Alkhateeb2014,Ramasamy2012,Rasekh2017,Schniter2014,Han2016,Abdelreheem2016,Alexandropoulos2017,Garcia2016,Choi2016,Va2017a}. This prior research has mainly focused on three directions: (i) beam training \cite{Wang2009,Hur2013,Noh2017,Donno2017}, (ii) compressive channel estimation \cite{Alkhateeb2014,Ramasamy2012,Rasekh2017,Schniter2014,Han2016}, and (iii) location aided beamforming \cite{Abdelreheem2016,Alexandropoulos2017,Garcia2016,Choi2016,Va2017a}. In beam training, the candidate beams at the transmitter and receiver are directly trained using exhaustive or adaptive search to select the ones that optimize the metric of interest, e.g., SNR. Beam training, though, requires large overhead to train all the possible beams and is mainly suitable for single-user and single stream transmissions \cite{Wang2009,Hur2013,Noh2017,Donno2017}. In order to enable spatial multiplexing at mmWave systems, \cite{Alkhateeb2014,Ramasamy2012,Rasekh2017,Schniter2014,Han2016} proposed to leverage the sparsity of mmWave channels and formulated the mmWave channel estimation problem as a sparse reconstruction problem. Compressive sensing tools were then used to efficiently estimate the parameters (angles of arrival/departure, path gains, etc.) of the sparse channel. While compressive channel estimation techniques can generally reduce the training overhead compared to exhaustive search solutions, they still require relatively large training overhead that scales with the number of antennas. Further, compressive channel estimation techniques normally make hard assumptions on the exact sparsity of the channel and the quantization of the angles of arrival/departure, which leaves their practical feasibility uncertain. 

To further reduce the training overhead, and given the directivity nature of mmWave beamforming, out-of-band information such as the locations of the transmitter and receiver can be leveraged to reduce the beamforming training overhead \cite{Abdelreheem2016,Alexandropoulos2017,Garcia2016,Choi2016,Va2017a}. In \cite{Abdelreheem2016}, the transmitter/receiver location information was exploited to guide the sensing matrix design used in the compressive estimation of the channel. Position information was also leveraged in \cite{Alexandropoulos2017,Garcia2016} to build the beamforming vectors in LOS mmWave backhaul and vehicular systems. In \cite{Choi2016,Va2017a}, the BSs serving vehicular systems build a database relating the vehicle position and the beam training result. This database is then leveraged to reduce the training overhead with the knowledge of the vehicle location.  While the solutions in \cite{Abdelreheem2016,Alexandropoulos2017,Garcia2016,Choi2016,Va2017a} showed that the position information can reduce the training overhead, relying only on the location information to design the beamforming vectors has several limitations. First,  position-acquisition sensors, such as GPS, have limited accuracy, normally in the order of meters, which may not work efficiently with narrow-beam systems. Second, GPS sensors do not work well inside buildings, which makes these solutions not capable of supporting indoor applications. Further, the beamforming vectors are not merely a function of the transmitter/receiver location but also of the environment geometry, blockages, etc. This makes location-based beamforming solutions mainly suitable for LOS environment, as the same location in NLOS environment may correspond to different beamforming vectors depending, for example, on the position of the obstacles.

\subsection{Contribution}

In this paper, we propose a novel integrated communication and machine learning solution for highly-mobile mmWave applications. Our proposed solution considers a coordinated beamforming system where a set of BSs simultaneously serve one mobile user. For this system, {a deep learning model learns how to predict the BSs beamforming vectors directly from  the signals received at the distributed BSs using only \textit{omni} or quasi-omni beam patterns}. This is motivated by the intuition that the signals jointly received at the  distributed BSs draw a defining multi-path signature not only of the user location, but also of its surrounding environment. This proposed solution has multiple gains.
First, making beamforming prediction based on the uplink received signals, and not on position information, enables the developed strategy to support both LOS and NLOS scenarios and waves the requirement for special position-acquisition sensors. Second, the prediction of the optimal beams requires only \textit{omni} received pilots, which can be captured with negligible training overhead. Further, the deep learning model in the proposed system operation does not require any training before deployment, as it learns and adapts to any environment. Finally, since the proposed deep learning model is integrated with the coordinated beamforming system, it inherits the coverage and reliability gains of coordination. 
More specifically, this paper contributions can be summarized as follows. 

\begin{itemize}
	\item We propose a low-complexity coordinated beamforming system in which a number of BSs adopting RF beamforming, linked to a central cloud processor applying baseband processing, simultaneously serve a mobile user. 
	%The proposed system is motivated by the reliable-coverage and low hand-off latency promised by coordinated beamforming in highly-mobile mmWave applications. 
	For this system, we formulate the training and design problem of the central baseband and BSs RF beamforming vectors to maximize the system \textit{effective} achievable rate. The effective rate is a metric that accounts for the trade-off between the beamforming training overhead and achievable rate with the designed beamforming vectors, which makes it a suitable metric for highly-mobile mmWave systems. 
	
	\item We develop a baseline coordinated beamforming strategy for the adopted system, which depends on uplink training in designing the RF and baseband beamforming vectors. With this baseline solution, the BSs first select their RF beamforming vectors from a predefined codebook. Then, a central processor designs its baseband beamforming to ensure coherent combining at the user. We prove that in some special yet important cases, the baseline beamforming strategy obtains optimal achievable rates. This solution, though, requires high training overhead, which motivates the integration with machine learning models. 
	
	\item We propose a novel integrated deep learning and coordinated beamforming solution, and develop its system operation and machine learning modeling. The key idea of the proposed solution is to leverage the signals received at the coordinating BSs with only omni or quasi-omni patterns, i.e., with negligible training overhead, to predict their RF beamforming vectors.  Further, the developed solution enables harvesting the wide-coverage and low-latency coordinated beamforming gains with low coordination overhead, rendering it a promising enabling solution for highly-mobile mmWave applications. 
\end{itemize}

\noindent Extensive simulations were performed to evaluate the performance of the developed solution and the impact of the key system and machine learning parameters. 
At both LOS and NLOS scenarios, the results show that the effective achievable rate of the developed solution approaches that of the genie-aided coordinated beamforming which knows the optimal beamforming vectors with no training overhead. 
Compared to the baseline solution, deep-learning coordinated beamforming achieves a noticeable gain, especially when users are moving with high speed and when the BSs deploy large antenna arrays. 
The results also confirm the ability of the proposed deep learning based beamforming to learn and adapt to time-varying environment, which is important for the system robustness.
Further, the results show that learning coordinated beamforming may not require phase synchronization among the coordinating BSs, which is especially important for practical implementations. 
All that highlights the capability of the proposed deep-learning solution in efficiently supporting highly-mobile applications in large-array mmWave systems.

% Notation
\textbf{Notation}: We use the following notation throughout this paper: $\bA$ is a matrix, $\ba$ is a vector, $a$ is a scalar, and $\cA$ is a set. $|\bA|$ is the determinant of $\bA$, whereas $\bA^T$, $\bA^H$, $\bA^*$ are its transpose, Hermitian (conjugate transpose), and conjugate respectively. $\mathrm{diag}(\ba)$ is a diagonal matrix with the entries of $\ba$ on its diagonal, and $\mathrm{blkdiag} \left(\bA_1, ..., \bA_N\right)$ is a block diagonal matrix with the matrices $\bA_1, ..., \bA_N$ on the diagonal. $\bI$ is the identity matrix and  $\cN(\bm,\bR)$ is a complex Gaussian random vector with mean $\bm$ and covariance $\bR$.

%%%%%%%%%%%%%%%%%%%%%%%%%%%%%%%%%%%%%%%%%%%%%%%%%%%%%%%%%%%%%%%%
\section{System and Channel Models} \label{sec:Model}
%%%%%%%%%%%%%%%%%%%%%%%%%%%%%%%%%%%%%%%%%%%%%%%%%%%%%%%%%%%%%%%
In this section, we describe the adopted frequency-selective coordinated mmWave system and channel models. The key assumptions made for each model are also
highlighted.

\subsection{System Model} \label{sec:SysModel}
\begin{figure}[t]
	\centerline{
		\includegraphics[scale=.75]{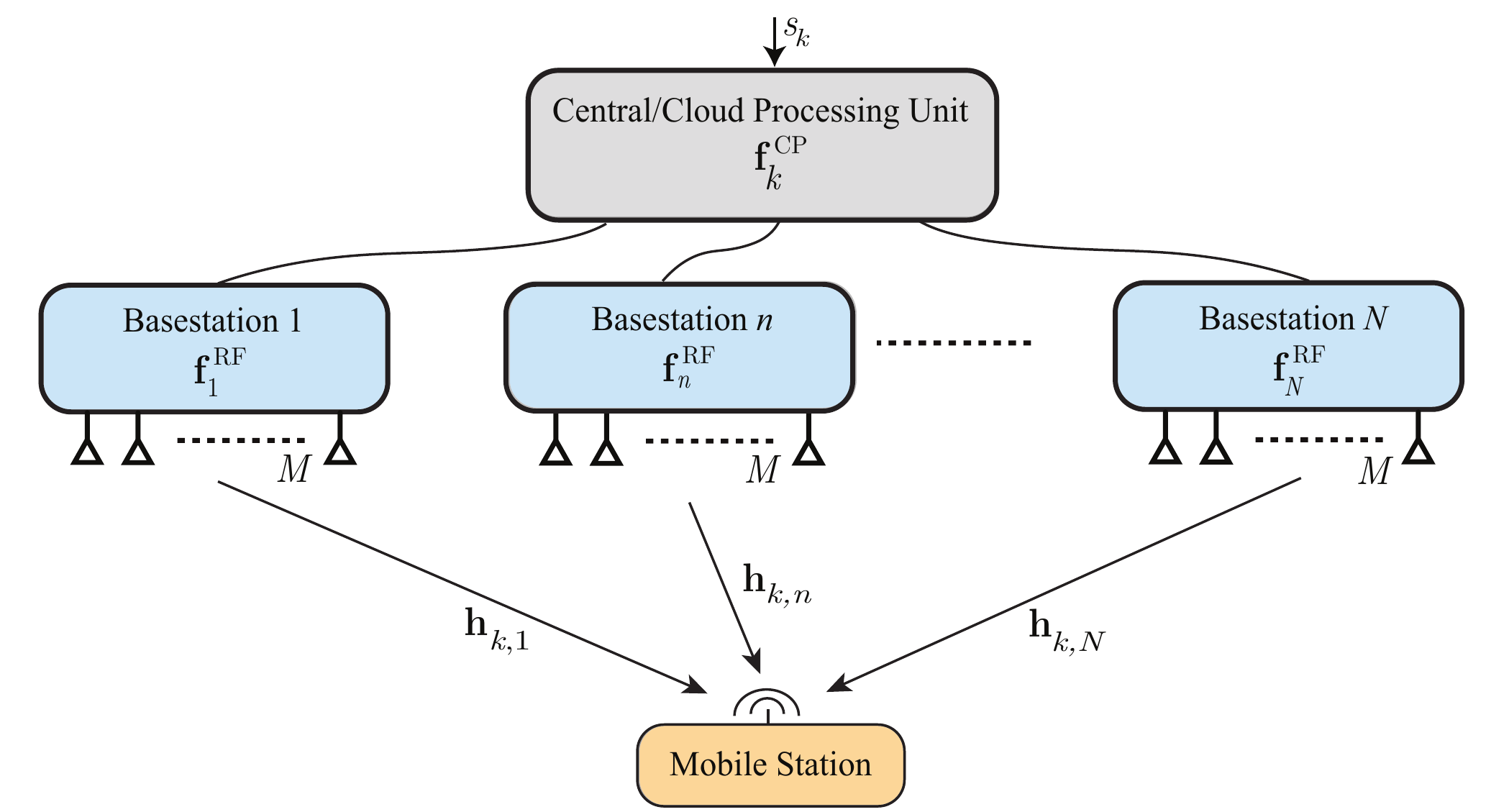}
	}
	\caption{A block diagram of the proposed  mmWave coordinated beamforming system. The transmitted signal $s_k$ at every subcarrier $k, k=1, .., K,$ is first precoded at the central/cloud processing unit using $\bff_k^\mathrm{CP}$, and then transmitted jointly from the $N$ terminals/BSs employing the RF beamforming vectors $\bff^\mathrm{RF}_{n}, n=1, ..., N$.}
	\label{fig:Sys_Model}
\end{figure}

Consider the mmWave communication system in \figref{fig:Sys_Model}, where $N$ base stations (BSs) or access points (APs) are simultaneously serving one mobile station (MS). Each BS is equipped with $M$ antennas and all the BSs are connected to a centralized/cloud processing unit. For simplicity, we assume that every BS has only one RF chain and is applying analog-only beamforming using networks of phase shifters \cite{HeathJr2016}. Extensions to more sophisticated mmWave precoding architectures at the BSs such as hybrid precoding \cite{ElAyach2014,Alkhateeb2014d} are also interesting for future research. In this paper, we assume that the mobile user has a single antenna. The developed algorithms and solutions, though, can be extended to multi-antenna users.

\textbf{In the downlink transmission},  the data symbol $s_k \in \mathbb{C}$ at subcarrier $k, k=1, ..., K,$ is first precoded using the $N \times 1$ digital precoder $\bff^\mathrm{CP}_k = \left[f^\mathrm{CP}_{k,1}, ..., f^\mathrm{CP}_{k,N}\right]^T \in \mathbb{C}^{N \times 1}$ at the central/cloud processing unit. The resulting symbols are transformed to the time domain using $N$ $K$-point IFFTs. A cyclic prefix of length $D$ is then added to the symbol blocks before sending them to the BSs using error-negligible and delay-negligible wired channels, e.g., optical fiber cables. Every BS $n$ applies a time-domain analog beamforming $\bff^\mathrm{RF}_n \in \mathbb{C}^{M \times 1}$ and transmits the resulting signal. The discrete-time transmitted complex baseband signal from the $n$th BS at the $k$th subcarrier can then be written as
\begin{equation} \label{eq:Tx_Signal}
\bx_{k,n}= \bff^\mathrm{RF}_n f^\mathrm{CP}_{k,n} s_k, 
\end{equation} 
where the transmitted signal on the $k$-th subcarrier  $s_k$ is normalized such that $\bbE [ s_k s_k^H]=\frac{P}{K}$, with $P$ the average total transmit power. Since the RF beamforming is assumed to be implemented using networks of quantized phase shifters, the entries of $ \bff^\mathrm{RF}_n$ are modeled as $\left[\bff^\mathrm{RF}_n\right]_m=\frac{1}{\sqrt{M}}e^{j \overline{\phi}_{n,m}}$, where $\overline{\phi}_{n,m}$ is a quantized angle. Adopting a per-subcarrier transmit power constraint and defining $\bF^\mathrm{RF}=\mathrm{blkdiag} \left( \bff^\mathrm{RF}_1, ..., \bff^\mathrm{RF}_N\right) \in \mathbb{C}^{N M \times N}$ , the cloud baseband precoder and the BSs RF beamformers satisfy
\begin{equation}
\left\| \bF^\mathrm{RF} \bff^\mathrm{CP}_k \right\|^2 = 1, \ \ \ k=1, 2, ..., K.
\end{equation}

At the user, assuming perfect frequency and carrier offset synchronization, the received signal is transformed to the frequency domain using a K-point FFT. Denoting the $M \times 1$ channel vector between the user and the $n$th BS at the $k$th subcarrier as $\bh_{k,n} \in \mathbb{C}^{M \times 1}$, the received signal at subcarrier $k$ after processing can be expressed as 
\begin{equation}
y_k= \sum_{n=1}^N \bh_{k,n}^T \bx_{k,n} + v_k, 
\end{equation}
where  $v_k \sim \mathcal{N}_\mathbb{C} \left(0,\sigma^2\right)$ is the receive noise at subcarrier $k$. 

\subsection{Channel Model} \label{sec:ChModel}

We adopt a geometric wideband mmWave channel model \cite{Rappaport2013a,Akdeniz2014,11ad,Samimi2014} with $L$ clusters. Each cluster $\ell, \ell=1, ..., L$ is assumed to contribute with one ray that has a time delay $\tau_\ell \in \mathbb{R}$, and azimuth/elevation angles of arrival (AoA) $\theta_\ell, \phi_\ell$. Further, let $\rho_{n}$ denote the path-loss between the user and the $n$-th BS, and $p_\mathrm{rc}(\tau)$ represents a pulse shaping function for $T_S$-spaced signaling evaluated at $\tau$ seconds \cite{Schniter2014}. With this model, the delay-d channel vector between the user and the $n$th BS, $\mathsf{\boldsymbol{h}}_{d,n}$, follows
\begin{equation} \label{eq:d-channel}
\mathsf{\boldsymbol{h}}_{d,n} = \sqrt{\frac{M}{\rho_{n}}} \sum_{\ell=1}^L \alpha_\ell \hspace{1pt}  p(d T_\mathrm{S} - \tau_\ell) \hspace{1pt} \ba_n\left(\theta_\ell, \phi_\ell\right),
\end{equation}
where  $\ba_n\left(\theta_\ell,\phi_\ell\right)$ is the array response vector of the $n$th BS at the AoA $\theta_\ell, \phi_\ell$.  
Given the delay-d channel in \eqref{eq:d-channel}, the frequency domain channel vector at subcarrier $k$, $\bh_{k,n}$, can be written as 
\begin{equation}
\bh_{k,n}=\sum_{d=0}^{D-1}  \mathsf{\boldsymbol{h}}_{d,n}  e^{-j \frac{2 \pi k}{K} d}.
\end{equation} 

\noindent Considering a block-fading channel model, $\left\{\bh_{k,n}\right\}_{k=1}^K$ are assumed to stay constant over the channel coherence time, denoted $T_\mathrm{C}$, which depends on the user mobility and the channel multi-path components  \cite{Va2015} . In the next section, we will develop the problem formulation and discuss this channel coherence time in more detail. 
 
%%%%%%%%%%%%%%%%%%%%%%%%%%%%%%%%%%%%%%%%%%%%%%%%%%%%%%%%%%
\section{Problem Formulation} \label{sec:Problem}
%%%%%%%%%%%%%%%%%%%%%%%%%%%%%%%%%%%%%%%%%%%%%%%%%%%%%%%%%%
The main goal of the proposed coordinated mmWave beamforming system is to enable wireless applications with high mobility and high data rate requirements, and with strict constraints on the coverage, reliability, and latency. Thanks to simultaneously serving the user from multiple BSs, the coordinated beamforming system in \sref{sec:Model} provides transmission diversity and robustness against blockage, which directly enhances the system coverage, reliability, and latency. The main challenge, however, with this system is achieving the high data rate requirements, as the time overhead of training and designing the cloud baseband and terminals RF beamforming vectors can be very large, especially for highly-mobile users. With this motivation, this paper focuses on developing efficient channel training and beamforming design strategies that maximize the system \textit{effective} achievable rate, and enable highly-mobile mmWave applications. Next, we formulate the effective achievable rate optimization problem.

\textbf{Achievable Rate:} Given the system and channel models in \sref{sec:Model}, and employing the cloud and RF beamformers $\left\{\bff^\mathrm{CP}_k\right\}_{k=1}^K$, $\bF^\mathrm{RF}$, the user achievable rate is expressed as 
\begin{equation} \label{eq:Ach_R}
R=\frac{1}{K}\sum_{k=1}^K  \log_2\left(1+\mathsf{SNR} \left| \sum_{n=1}^N \bh_{k,n}^T  \bff^\mathrm{RF}_n f^\mathrm{CP}_{k,n}      \right|^2 \right),
\end{equation}
where  $\mathsf{SNR}=\frac{P}{K \sigma^2}$ denotes the signal-to-noise ratio.

 Due to the constraints on the RF hardware, such as the availability of only quantized angles, $\overline{\phi}_{m,n}$, for the RF phase shifters, the BSs RF beamforming vectors $\bff^\mathrm{RF}_{n}, n=1, ..., N$, can take only certain values \cite{ElAyach2014,Wang2009,Wang2015,Alkhateeb2014}. Therefore, we assume that the RF beamforming vectors are selected from finite-size codebooks, which we formally state in the following assumption.
\begin{assumption}
	The BSs RF beamforming vectors are subject to the quantized codebook constraint, $\bff^\mathrm{RF}_{n} \in \boldsymbol{\cF}_\mathrm{RF}, \forall n$, where the cardinality of $\boldsymbol{\cF}_\mathrm{RF}$ is $|\boldsymbol{\cF}_\mathrm{RF}|=N_\mathrm{tr}$.
\end{assumption}

\noindent The optimal cloud baseband and terminals RF beamforming vectors that maximize the system achievable rate can then be found by solving 
\begin{align} \label{eq:Opt1}
\{ \overset{\star}{\bff^{\mathrm{CP}}_k}\}_{k=1}^{K}, \{\overset{\star}{\bff^\mathrm{RF}_{n}}\}_{n=1}^N = & \argmax \sum_{k=1}^K \log_2\left(1+\mathsf{SNR} \left|\sum_{n=1}^N \bh_{k,n}^T \bff^\mathrm{RF}_n f^\mathrm{CP}_{k,n}     \right|^2\right), \\
& \text{s.t.} \hspace{30pt} \bff^\mathrm{RF}_{n} \in \boldsymbol{\cF}_\mathrm{RF},  \hspace{30pt} \forall n, \\
&   \hspace{40pt} \left\| \bF^\mathrm{RF}  \bff^\mathrm{CP}_{k} \right\|^2 = 1, \hspace{30pt} \forall k, \label{eq:Opt_2L3}
\end{align}
which is addressed in the next lemma.

\begin{lemma} 
\label{lem:1} 
 For a given channel $\overline{\bh}_{k}=\left[\bh_{k,1}^T, ..., \bh_{k,N}^T\right]^T, \forall k$, the optimal cloud baseband precoder and terminal RF beamformers that solve \eqref{eq:Opt1}-\eqref{eq:Opt_2L3} are 
	\begin{equation} \label{eq:Sol1}
	\overset{\star}{\bff^{\mathrm{CP}}_k}=\frac{{\left(\overline{\bh}_{k}^T \bF^\mathrm{RF}\right)^H }}{\left\| \overline{\bh}_{k}^T \bF^\mathrm{RF}  \right\|}, \hspace{30pt}  \forall k,
	\end{equation}
	and 
	\begin{equation} \label{eq:Sol2}
	\{\overset{\star}{\bff^\mathrm{RF}_{n}}\}_{n=1}^N =\argmax_{\bff^\mathrm{RF}_n \in \boldsymbol{\cF_\mathrm{RF}}, \forall n}    \sum_{k=1}^K \log_2\left( 1+ \mathsf{SNR} \sum_{n=1}^N \left|\bh_{k,n}^T  \bff^\mathrm{RF}_n \right|^2   \right),
	\end{equation}
	which yield the optimal achievable rate $R^\star$.
\end{lemma}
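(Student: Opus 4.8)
The plan is to solve the joint problem \eqref{eq:Opt1}--\eqref{eq:Opt_2L3} by nesting the optimization: for a fixed choice of RF beamformers $\bF^\mathrm{RF}$, I would first design the baseband precoders $\{\bff^\mathrm{CP}_k\}$ in closed form, then substitute the resulting objective back and reduce the outer problem to a search over the RF codebook. The first step is to rewrite the combining term compactly. Exploiting the block-diagonal structure $\bF^\mathrm{RF}=\mathrm{blkdiag}(\bff^\mathrm{RF}_1,\dots,\bff^\mathrm{RF}_N)$ together with the stacked channel $\overline{\bh}_k$, the $n$-th entry of the row vector $\overline{\bh}_k^T\bF^\mathrm{RF}$ is precisely $\bh_{k,n}^T\bff^\mathrm{RF}_n$, so that $\sum_{n=1}^N \bh_{k,n}^T\bff^\mathrm{RF}_n f^\mathrm{CP}_{k,n}=\left(\overline{\bh}_k^T\bF^\mathrm{RF}\right)\bff^\mathrm{CP}_k$. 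The per-subcarrier objective then depends on $\bff^\mathrm{CP}_k$ only through this scalar inner product.

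The crux of the argument is the simplification of the power constraint. Since each RF beamforming vector has constant-modulus entries $\frac{1}{\sqrt{M}}e^{j\overline{\phi}_{n,m}}$, each satisfies $\|\bff^\mathrm{RF}_n\|^2=1$, and the block-diagonal structure then yields $\left(\bF^\mathrm{RF}\right)^H\bF^\mathrm{RF}=\bI_N$. Consequently the constraint $\|\bF^\mathrm{RF}\bff^\mathrm{CP}_k\|^2=1$ collapses to the plain unit-norm condition $\|\bff^\mathrm{CP}_k\|^2=1$. This decoupling is the key step I expect to carry the proof: without it, the baseband and RF designs would remain entangled through the power constraint, and the closed form \eqref{eq:Sol1} would not follow cleanly. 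Everything afterward is routine.

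With the constraint reduced to $\|\bff^\mathrm{CP}_k\|=1$ and the objective separable across $k$ (the $\log_2$ is increasing and each summand involves only one $\bff^\mathrm{CP}_k$), I would optimize each subcarrier independently. For each $k$ the problem becomes maximizing $\left|\left(\overline{\bh}_k^T\bF^\mathrm{RF}\right)\bff^\mathrm{CP}_k\right|^2$ over unit-norm $\bff^\mathrm{CP}_k$, which by Cauchy--Schwarz (the matched-filter solution) is achieved by $\overset{\star}{\bff^{\mathrm{CP}}_k}=\left(\overline{\bh}_k^T\bF^\mathrm{RF}\right)^H/\|\overline{\bh}_k^T\bF^\mathrm{RF}\|$, attaining the value $\|\overline{\bh}_k^T\bF^\mathrm{RF}\|^2=\sum_{n=1}^N|\bh_{k,n}^T\bff^\mathrm{RF}_n|^2$. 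This establishes \eqref{eq:Sol1}, and substituting the attained value back into the sum reduces the remaining optimization over $\{\bff^\mathrm{RF}_n\}$ to the codebook search in \eqref{eq:Sol2}, giving the optimal rate $R^\star$. The only subtlety worth flagging is that the matched filter is optimal only up to an arbitrary phase, so \eqref{eq:Sol1} is one representative of the argmax set; this is harmless since the rate depends on the precoder solely through $|\cdot|^2$.
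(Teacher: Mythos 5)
Your proposal is correct and follows essentially the same route as the paper: the paper's (one-sentence) proof likewise reduces the constraint $\left\| \bF^\mathrm{RF} \bff^\mathrm{CP}_k \right\|^2 = 1$ to $\left\| \bff^\mathrm{CP}_k \right\|^2 = 1$ via the block-diagonal structure of $\bF^\mathrm{RF}$ and then invokes the maximum-ratio-transmit (matched-filter) solution per subcarrier, exactly your Cauchy--Schwarz step. Your write-up is simply a more explicit version, and it correctly notes the additional fact the paper leaves implicit, namely that the unit-modulus entries give $\left(\bF^\mathrm{RF}\right)^H \bF^\mathrm{RF} = \bI_N$, not merely a diagonal matrix.
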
 
\begin{proof}
	The proof is straightforward, and follows from the maximum ratio transmit solution by noting that the power constraint $\left\| \bF^\mathrm{RF}  \bff^\mathrm{CP}_{k} \right\|^2 = 1$ can be reduced to $\left\| \bff^\mathrm{CP}_{k} \right\|^2 = 1$, given the block diagonal structure of the RF precoding matrix $\bF^\mathrm{RF}$.
\end{proof}

\textbf{Effective Achievable Rate:}  The optimal achievable rate $R^\star$, given by Lemma \ref{lem:1}, assumes perfect channel knowledge at the cloud processing unit and RF terminals. Obtaining this channel knowledge, however, is very challenging and requires large training overhead in mmWave systems with RF architectures. This is mainly due to (i) the large number of antennas at the BSs,  and (ii) the RF filtering of the channel seen at the baseband \cite{Alkhateeb2014d}. To accurately evaluate the actual rate experienced by the mobile user, it is important to incorporate the impact of this time overhead required for the channel training and beamforming design. For that, we adopt the \textit{effective} achievable rate metric, which we define shortly. 

The formulation of the effective achievable rate requires understanding how often the beamforming vectors need to be redesigned as the user moves. This can be captured by one of two metrics: (i) the channel coherence time $T_\mathrm{C}$, which is the time over which the multi-path channel remains almost constant, and (ii) the channel \textit{beam} coherence time $T_\mathrm{B}$, which is a recent concept introduced for mmWave systems to represent the average time over which the beams stay aligned \cite{Va2015}. While the channel coherence time is normally shorter than the beam coherence time, It was shown in \cite{Va2015} that updating the beams every beam coherence time incurs negligible receive power loss compared to updating them every channel coherence time. Adopting this model, we make the following assumption on the system operation. 
\begin{assumption} \label{assum2}
	The cloud baseband and terminal RF beamforming vectors are assumed to be retrained and redesigned every beam coherence time, $T_\mathrm{B}$, such that the first $T_\mathrm{tr}$ time of every beam coherence time  is allocated for the channel training and beamforming design, and the rest of it is used for the data transmission using the designed beamforming vectors. 
\end{assumption}

Now, we define the effective achievable rate, $R_\mathrm{eff}$,  as the achievable rate using certain precoders, $\left\{\bff^\mathrm{CP}_k\right\}_{k=1}^K, \bF^\mathrm{RF}$, times the percentage of time these precoders are used for data transmission, i.e.,
\begin{equation} \label{eq:effR}
R_\mathrm{eff}= \left(1-\frac{T_\mathrm{tr}}{T_\mathrm{B}}\right) \frac{1}{K}\sum_{k=1}^K  \log_2\left(1+\mathsf{SNR} \left| \sum_{n=1}^N \bh_{k,n}^T  \bff^\mathrm{RF}_n f^\mathrm{CP}_{k,n}      \right|^2 \right).
\end{equation}
The effective achievable rate in \eqref{eq:effR} captures the impact of user mobility on the actually experienced data rate. For example, with higher mobility, the beam coherence time decreases, which results in lower data rate for the same beamforming vectors and beam training overhead.  \textbf{The objective of this paper} is then to develop efficient channel training and beamforming design strategies that maximize the system effective achievable rate. If $\Pi\left(T_\mathrm{tr}, \left\{\bff^\mathrm{CP}_k\right\}_{k=1}^K, \bF^\mathrm{RF}\right)$ represents a certain channel training/beamforming design strategy that requires training overhead $T_\mathrm{tr}$ to design the cloud and RF beamforming vectors $\left\{\bff^\mathrm{CP}_k\right\}_{k=1}^K, \bF^\mathrm{RF}$, the final problem formulation can then be written as
\begin{align} \label{eq:Opt2}
\Pi^\star\left(T_\mathrm{tr}, \left\{\bff^\mathrm{CP}_k\right\}_{k=1}^K, \bF^\mathrm{RF}\right)  = &  \argmax \left(1-\frac{T_\mathrm{tr}}{T_\mathrm{B}}\right) \sum_{k=1}^K \log_2\left(1+\mathsf{SNR} \left|\sum_{n=1}^N \bh_{k,n}^T \bff^\mathrm{RF}_n f^\mathrm{CP}_{k,n}      \right|^2\right), \\
&  \text{s.t.} \hspace{30pt} \bff^\mathrm{RF}_n \in \cF_\mathrm{RF}  \hspace{30pt} \forall n, \\
&    \hspace{40pt} \left\| \bff^\mathrm{CP}_k\right\|^2 = 1 \hspace{30pt} \forall k.  \label{eq:Opt_3_3}
\end{align}

Solving the problem in \eqref{eq:Opt2}-\eqref{eq:Opt_3_3} means developing solutions that require very low channel training overhead to realize beamforming vectors that maximize the system achievable rate, $R$. It is worth noting also that $R^\star$ represents an ultimate upper bound for the effective achievable rate $R_\mathrm{eff}$ with $T_\mathrm{tr}=0$ and $R=R^\star$. 

In the literature, two main directions to address this mmWave channel estimation/beamforming design problem are compressed sensing and beam training. In compressed sensing, the sparsity of  mmWave channels is leveraged and random beams are employed to estimate the multi-path channel parameters, such as the angles or arrival and path gains  \cite{Alkhateeb2014,Alkhateeb2015,Ramasamy2012,Schniter2014,Rasekh2017}. The estimated channel can then be used to construct the beamforming vectors. The other approach is to directly train the RF beamforming vectors through exhaustive or hierarchical search to find the best beams \cite{Wang2009,Hur2013,11ad}.  Each of the two directions has its own advantages and limitations. Both of them, though, require large training overhead which makes them inefficient in handling highly-mobile mmWave applications.  In this paper, we show that integrating machine learning tools with typical mmWave beam training solutions can yield efficient channel training/beamforming design strategies that have very low training overhead and near-optimal achievable rates, which enables highly-mobile mmWave systems.

In the next sections, we present a baseline coordinated mmWave beamforming solution based on conventional beam training techniques. Then, we show in  \sref{sec:DL_Framework} how machine learning models can be integrated with the proposed baseline solution, leading to novel techniques with near-optimal effective achievable rates for mmWave systems.

%%%%%%%%%%%%%%%%%%%%%%%%%%%%%%%%%%%%%%%%%%%%%%%%%%%%%%%%%%%%%%%
\section{Baseline Coordinated Beamforming} \label{sec:BL_Comm}
%%%%%%%%%%%%%%%%%%%%%%%%%%%%%%%%%%%%%%%%%%%%%%%%%%%%%%%%%%%%%%%

In this section, we present a baseline solution for the channel training/beamforming design problem in \eqref{eq:Opt2}-\eqref{eq:Opt_3_3} based on conventional communication system tools. The proposed solution has low beamforming design complexity and enables the integration with the machine learning model in \sref{sec:DL_Framework}. In the following subsections, we present the baseline solution and evaluate its achievable rate performance and mobility support. 

\subsection{Proposed Solution} \label{subsec:BL}

As shown in \sref{sec:Problem}, for a given set of RF beamforming vectors $\left\{\bff^\mathrm{RF}_n\right\}_{n=1}^N$, the cloud baseband beamformers can be written optimally as a function of the effective channel $\overline{\bh}_{k}^T \bF^\mathrm{RF}$. This implies that the cloud baseband and terminal RF beamforming design problem is separable and can be solved in two stages for the RF and baseband beamformers. To find the optimal RF beamforming vectors, though, an exhaustive search over all possible BSs beamforming combinations is needed, as indicated in \eqref{eq:Sol2}. This yields high computational complexity, especially for large antenna systems with large codebook sizes. For the sake of low-complexity solution, we propose the following system operation. 

%%%-------------------------------------------------------%%%
\textbf{Uplink Simultaneous Beam Training:} 
%%%-------------------------------------------------------%%%
In this stage, the user transmits $N_\mathrm{tr}=\left|\boldsymbol{\cF}_\mathrm{RF}\right|$ repeated pilot sequences of the form $\left\{s_k^{\mathrm{pilot}}\right\}_{k=1}^K$ to the BSs. During this training time, every BS switches between its $N_\mathrm{tr}$ RF beamforming vectors such that it combines every received pilot sequence with a different RF beamforming vector. Let $\bg_p, p=1, ..., N_\mathrm{tr}$ denotes the $p$-th beamforming codeword in $\boldsymbol{\cF}$, then the combined received signal at the $n$-th BS for the $p$-th training sequence can be expressed as
\begin{equation}
r_{k,n}^{(p)}= \bg_{p}^T \bh_{k,n} s^\mathrm{pilot}_k+ \bg_p^T \bv_{k,n}, \hspace{50pt} k=1, 2, ..., K,
\end{equation}
where $\bv_{k,n} \sim \mathcal{N}_\mathbb{C}\left(\boldsymbol{0}, \sigma^2 \bI\right)$ is the receive noise vector at the $n$-th BS and $k$-th subcarrier. 

The combined signals for all the beamforming codewords are then fed back from all the BSs/terminals to the cloud processor, which calculates the received power using every RF beamforming vector and selects the BSs downlink RF beamforming vectors separately for every BS, according to 
\begin{align} 
\bff^\mathrm{BL}_n =\argmax_{\bg_p \in \boldsymbol{\cF}_\mathrm{RF} } \sum_{k=1}^K  \log_2\left(1+\mathsf{SNR} \left| \bh_{k,n}^T  \bg_{p}     \right|^2 \right). \label{eq:BL_RF}
\end{align}
Note that selecting the RF beamforming vectors disjointly for the different BSs avoids the combinatorial optimization complexity of the exhaustive search and enables the integration with the machine learning model, as will be discussed in \sref{sec:DL_Framework}. Further, this disjoint optimization can be shown to yield optimal achievable rate in some important special cases for mmWave systems, which will be discussed in the next subsection. Once the RF beamforming vectors are selected, the cloud baseband beamforming vectors are constructed according to \eqref{eq:Sol1}. 

%%%-------------------------------------------------------%%%
\textbf{Downlink Coordinated Data Transmission:}
%%%-------------------------------------------------------%%%
The designed cloud and RF beamforming vectors are employed for the downlink data transmission to achieve the coverage, reliability, and latency gains of the coordinated beamforming transmission. With the proposed baseline solution for the channel training/beamforming design, and denoting the beam training pilot sequence time as $T_\mathrm{p}$, the effective achievable rate, $R_\mathrm{eff}^\mathrm{BL}$, can be characterized as 
\begin{equation} \label{eq:EfFR_BL}
R_\mathrm{eff}^\mathrm{BL}=\left(1-\frac{N_\mathrm{tr} T_\mathrm{p}}{T_\mathrm{B}}\right) \frac{1}{K} \sum_{k=1}^K \log_2\left(1+  \mathsf{SNR} \sum_{n=1}^N \left|\bh_{k,n}^T \bff^\mathrm{BL}_n \right|^2  \right),
\end{equation}
where  the RF beamforming vectors $\bff^\mathrm{BL}_n, n=1, ..., N$, are given by \eqref{eq:BL_RF}. 

%%%-------------------------------------------------------%%%
\subsection{Performance Analysis and Mobility Support}
%%%-------------------------------------------------------%%%
In this subsection, we evaluate the achievable rate performance of the proposed solution and discuss its 
mobility support.

%%%-------------------------------------------------------%%%
\textbf{Achievable Rate:}
%%%-------------------------------------------------------%%%
Despite its low complexity and the disjoint RF beamforming design, the achievable rate of the baseline coordinated beamforming solution converges to the upper bound $R^\star$ in important special cases for mmWave systems, namely in the single-path channels and large antenna regimes, which is captured by the following proposition. 
\begin{proposition} \label{prop:BL_R}
Consider the system and channel models in \sref{sec:Model}, with a pulse shaping function $p(t) =\delta(t)$, then the achievable rate of the baseline coordinated beamforming solution satisfies 
\begin{equation}
R^\mathrm{BL}= \frac{1}{K} \sum_{k=1}^K \log_2\left(1+  \mathsf{SNR}  \sum_{n=1}^N \left|\bh_{k,n}^T \bff^\mathrm{BL}_n \right|^2  \right) \ = R^\star, \ \ \ \text{for} \ L=1,
\end{equation}
and when a beamsteering codebook $\boldsymbol{\cF}_\mathrm{RF}$ is adopted, with beamforming codewords $\bg_p=\ba(\overline{\theta}_p, \overline{\phi}_p)$ for some quantized angles $\overline{\theta_p}, \overline{\phi}_p$, the achievable rate of the baseline solution follows
\begin{equation}
\lim_{M \rightarrow \inf} R^\mathrm{BL} = R^\star \ \ \  \text{almost surely}. 
\end{equation}
\end{proposition}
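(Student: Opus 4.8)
The plan is to prove both claims through a single mechanism: showing that, under the stated hypotheses, the coupled joint RF selection \eqref{eq:Sol2} and the disjoint per-BS selection \eqref{eq:BL_RF} pick the \emph{same} beamforming vectors, so that $R^\mathrm{BL}=R^\star$. Note first that $R^\mathrm{BL}\le R^\star$ holds trivially, since by Lemma \ref{lem:1} the baseband precoder in \eqref{eq:Sol1} is already optimal given any RF choice, and $\{\bff^\mathrm{BL}_n\}$ is one feasible RF selection for the maximization \eqref{eq:Opt1}; hence it suffices to establish the reverse inequality. The lever I would use is that the joint objective $\sum_k\log_2(1+\mathsf{SNR}\sum_n|\bh_{k,n}^T\bff^\mathrm{RF}_n|^2)$ is coordinatewise nondecreasing in each per-BS combined power $|\bh_{k,n}^T\bff^\mathrm{RF}_n|^2$. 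Consequently, whenever each BS's combined power depends on its codeword only through a single scalar ``angular gain,'' the joint maximization decouples across BSs and returns exactly the disjoint baseline selection.

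For the single-path case ($L=1$), the channel at every subcarrier is a codeword-independent scalar times one steering vector, $\bh_{k,n}=s_{k,n}\,\ba_n(\theta_1,\phi_1)$, with $s_{k,n}=\sqrt{M/\rho_n}\,\alpha_1 e^{-j2\pi k d_1/K}$ under $p(t)=\delta(t)$ (for the on-grid delay $d_1=\tau_1/T_\mathrm{S}$). Thus $|\bh_{k,n}^T\bg_p|^2=|s_{k,n}|^2\,|\ba_n(\theta_1,\phi_1)^T\bg_p|^2$, and since $\sum_k\log_2(1+\mathsf{SNR}|s_{k,n}|^2 y)$ is increasing in $y\ge 0$, the disjoint rule \eqref{eq:BL_RF} reduces to $\bff^\mathrm{BL}_n=\argmax_{\bg_p}|\ba_n(\theta_1,\phi_1)^T\bg_p|^2$ for each $n$, independently of the scalars $s_{k,n}$. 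Because the joint objective is nondecreasing in each $|\ba_n(\theta_1,\phi_1)^T\bff^\mathrm{RF}_n|^2$, maximizing over the product codebook is achieved term by term, again yielding $\argmax_{\bg_p}|\ba_n(\theta_1,\phi_1)^T\bg_p|^2$. The two selections therefore coincide and $R^\mathrm{BL}=R^\star$ holds exactly, with no limit required.

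When $L>1$ this single-angular-gain structure is lost: $\bh_{k,n}$ is a superposition over paths, the codeword choice trades off which paths it captures, and the cross terms between paths make $|\bh_{k,n}^T\bg_p|^2$ genuinely subcarrier-dependent, so the decoupling above fails at finite $M$. Here I would invoke the asymptotic orthogonality of the steering vectors, $\tfrac{1}{M}\ba_n(\theta,\phi)^H\ba_n(\theta',\phi')\to 0$ for distinct angles while equal to $1$ for coincident ones. Expanding $|\bh_{k,n}^T\bg_p|^2$ with a beamsteering codeword $\bg_p=\ba(\overline\theta_p,\overline\phi_p)$, the inter-path cross terms carry inner products of distinct steering vectors and vanish relative to the matched term as $M\to\infty$; the best codeword at BS $n$ then isolates a single path and its combined power tends to $(M/\rho_n)|\alpha_{\ell^\star_n}|^2$, which is once more codeword-selected through a single angular gain and independent of $k$. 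Feeding this back into the monotonicity argument of the single-path case, the disjoint selection becomes asymptotically joint-optimal, so $R^\star-R^\mathrm{BL}\to 0$.

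The ``almost surely'' qualifier I would dispatch by noting that the AoAs and gains are drawn from continuous distributions: with probability one the paths have pairwise-distinct angles (making the isolation well defined and the dominant path at each BS unique) and no two paths tie in strength, which excludes the measure-zero configurations in which the maximizers of \eqref{eq:BL_RF} or \eqref{eq:Sol2} are ambiguous. The main obstacle is the large-array step itself: I must quantify the decay rate of the inter-path cross terms and verify they are dominated by the order-$M$ matched gain \emph{uniformly in $k$ and across the finite set of paths}, and I must address codebook resolution, since a fixed set of beamsteering angles would eventually fall outside the shrinking beamwidth of every path; the natural assumption is a DFT-type beamsteering codebook whose resolution refines with $M$, so that a codeword remains aligned with each path AoA. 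Controlling these two effects and then passing the vanishing power gap through the logarithms in the rate expression is the technical heart of the proof.
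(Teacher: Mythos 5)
The paper offers no proof to compare against --- its ``proof'' is the single line that the argument is simple and omitted for space --- so your proposal has to stand on its own, and in its essentials it does. Reducing both claims to one decoupling mechanism is exactly the right (and almost surely the intended) argument: the joint objective in \eqref{eq:Sol2} is coordinatewise nondecreasing in the per-BS powers $\left|\bh_{k,n}^T\bff^\mathrm{RF}_n\right|^2$, so whenever a single codeword per BS simultaneously maximizes that power on every subcarrier, the joint selection and the disjoint baseline rule \eqref{eq:BL_RF} coincide, and combined with the trivial direction $R^\mathrm{BL}\le R^\star$ this gives equality. Your $L=1$ case is complete: with $p(t)=\delta(t)$ the channel at every subcarrier is a unit-modulus-varying scalar times one steering vector, a dominant codeword exists, and $R^\mathrm{BL}=R^\star$ holds exactly with no limit.

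For the large-$M$ claim, your skeleton (asymptotic orthogonality of steering vectors, almost-surely distinct path angles under continuous angle distributions, isolation of a single path per codeword making the combined power asymptotically subcarrier-independent, then reuse of the decoupling argument) is correct, and the two obstacles you flag are genuine rather than oversights. In particular, the codebook-resolution issue is not a defect of your proof but an implicit assumption in the proposition itself: with a \emph{fixed} finite set of quantized angles and path angles drawn from a continuous distribution, every codeword's normalized gain on every path decays, the combined powers remain bounded and oscillate with $M$, and the stated limit need not even exist. The claim is only meaningful when the beamsteering grid refines with the array, which is consistent with the paper's own codebook, $N_\mathrm{tr}=M_y M_z N_{\mathrm{OS},y} N_{\mathrm{OS},z}$ in \sref{subsec:Setup}; under that reading your argument goes through, and what remains --- uniform control of the finitely many inter-path cross terms and passing the vanishing power gap through the logarithm --- is routine. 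You have correctly identified, and been honest about, the only place where real technical work is required.
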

\begin{proof}
The proof is simple and is omitted due to space limitation. 
\end{proof}

Proposition \ref{prop:BL_R} shows that, for some important special cases, the disjoint RF beamforming design across BSs achieves the same data rate of the upper bound $R^\star$ which requires combinatorial optimization complexity. 

%%%-------------------------------------------------------%%%
\textbf{Effective Achievable Rate and Mobility Support:}
%%%-------------------------------------------------------%%%
The effective achievable rate depends on (i) the time overhead in training the channel and designing the beamforming vectors, and (ii) the achievable rate using the constructed beamforming vectors. While the baseline solution can achieve optimal rate in some special yet important mmWave-relevant cases, the main drawback of this solution is the requirement of large training overhead, as it exhaustively searches over all the $N_\mathrm{tr}$ codebook beamforming vectors. This makes it very inefficient in supporting wireless applications with high throughput and mobility requirements. For example, consider a system model with BSs employing $32 \times 8$ uniform planar antenna arrays, and adopting an oversampled beamsteering RF codebook of size $N_\mathrm{tr}=1024$. If the pilot sequence training time is $T_\mathrm{p}=10$ us, this means that the training over head will consume $\sim 45\%$ of the channel beam coherence time for a vehicle moving with speed $v=30$ mph, whose beam coherence time is around $23$ ms \cite{Va2015}.  In the next section, we show how machine learning can be integrated with this baseline solution to dramatically reduce this training overhead and enable highly-mobile mmWave applications. 

%%%%%%%%%%%%%%%%%%%%%%%%%%%%%%%%%%%%%%%%%%%%%%%%%%%%%%%%%%%%%%%%%%%%%%%%%%%%%%%%%%%%%%%%%%%%%%%%%%%%%%%%%
\section{Deep Learning Coordinated  Beamforming} \label{sec:DL_Framework}
%%%%%%%%%%%%%%%%%%%%%%%%%%%%%%%%%%%%%%%%%%%%%%%%%%%%%%%%%%%%%%%%

Machine learning has attracted considerable interest in the last few years, thanks to its ability in creating smart systems that can take successful decisions and make accurate predictions. Inspired by these gains, this section introduces a novel application of machine learning in mmWave coordinated beamforming. We show that leveraging  machine learning tools can yield interesting performance gains that are very difficult to attain with traditional communication systems. In the next subsections, we first explain the main idea of the proposed coordinated deep learning beamforming solution, highlighting its advantages. Then, we delve into a detailed description of the system operation and the machine learning modeling. For a brief background on machine/deep learning, we refer the reader to \cite{Goodfellow-et-al-2016}. 
%Appendix \ref{app:ML}. 

\subsection{The Main Idea}
As discussed in \sref{sec:BL_Comm}, the key challenge in supporting highly-mobile mmWave applications is the large training overhead associated with estimating the large-scale MIMO channel or scanning the large number of narrow beams. An important note about these beam training solutions (and similarly for compressed sensing) is that they normally do not make any use of the past experience, i.e., the previous beam training results. Intuitively, the beam training result is a function of the environment setup (user/BS locations, room furniture, street buildings and trees, etc.). These functions, though, are difficult to characterize by closed-form equations, as they generally convolve many parameters and are unique for every environment setup. 

In this paper, we propose to\textbf{ integrate deep learning models with the communication system design to learn the implicit mapping function relating the environment setup, which include the environment geometry and user location among others, and the beam training results}. 
To achieve that, the main question is how to characterize the user locations and environment setup in the learning models at the BSs? One solution is to rely on the GPS data fed back from the users. This solution, however, has several drawbacks: (i) the GPS accuracy is normally in the order of meters, which may not be reliable for mmWave narrow beamforming, (ii) GPS devices do not work well inside buildings, and therefore will not support  indoor applications, such as wireless virtual/augmented reality. Further, relying only on the user location is insufficient as the beamforming direction depends also on the environment, which is not captured by the GPS data.  
In the proposed solution,  \textbf{the machine learning model uses the uplink pilot signal received at the terminal BSs with only \textit{omni} or quasi-omni beam patterns to learn and predict the best RF beamforming vectors}. 
%Note that these received pilot signals are the result of the propagation, reflection, and diffraction of the transmitted signal at the different elements of the environment. Therefore, these pilots, which are received \textit{jointly} at the different BSs, draw an RF signature of the environment and the user/BS locations --- the signature we need to learn the beamforming directions. 
%
Note that these received pilot signals at the BSs are the results of the interaction between the transmitted signal from the user and the different elements of the environment through propagation, reflection, and diffraction. Therefore, these pilots, which are received \textit{jointly} at the different BSs, draw an RF \textit{signature} of the environment and the user/BS locations --- the signature we need to learn the beamforming directions.

This proposed coordinated deep learning solution operates in two phases. In the first phase (learning), the deep learning model monitors the beam training operations and learns the mapping from the omni-received pilots to the beam training results. In the second phase (prediction), the system relies on the developed deep learning model to predict the best RF beamforming using only the omni-received pilots, totally eliminating the need for beam training. This solution, therefore, achieves multiple important gains in the same time. First, it does not need any special resources for learning, such as GPS data, as the deep learning model learns how to select the beamforming vectors directly from the received uplink pilot signal. Second, since the deep learning model predicts the best RF beamforming vectors using only omni-received uplink pilots, the proposed solution has negligible training overhead and can efficiently support highly-mobile mmWave applications, as will be shown in \sref{sec:Results}. It is worth noting here that while combining the uplink training signal with omni patterns penalizes the receive SNR, we show in \sref{subsec:Comm_Impact} that this is still sufficient to efficiently train the learning model with reasonable uplink transmit power. Another key advantage of the proposed system operation is that the deep learning model does not need to be trained before deployment, as it learns and adapts to any environment, and can support both LOS and NLOS scenarios. Further, as we will see in \sref{sec:Results}, the deep learning model learns and memorizes the different scenarios it experiences, such as different traffic patterns, which enables it to become more robust over time. Finally, since the proposed deep learning model is integrated with the baseline coordinated beamforming solution, the resulting system inherits the coverage, reliability, and latency gains discussed in \sref{sec:Problem}. 
%We will discuss all these gains in detail throughout the rest of the paper. 

%%%%%%%%%%%%%%%%%%%%%%%%%%%%%%%%%%%%%%%%%%%%%%%%%%%%%%%%%%%%%%%%%%%%%%%%%%%%%%%%%%
\subsection{System Operation} \label{subsec:Oper}

The proposed deep learning coordinated beamforming integrates machine learning with the baseline beamforming solution in \sref{sec:BL_Comm} to reduce the training overhead and achieves high effective achievable rates. This integrated system operates in two phases, namely the online learning and the deep learning prediction phases depicted in Figures \ref{fig:SyS_Op} and \ref{fig:SyS_Opx}. Next, we explain the two phases in detail.

\begin{figure}[t]
	\centering
	\subfigure[center][{Online Learning Phase}]{
		\includegraphics[scale=1]{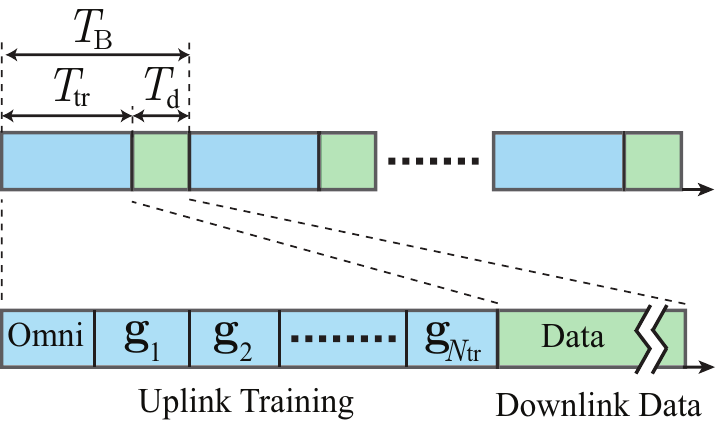}
		\label{fig:Op1}}
	\subfigure[center][{Deep Learning Prediction Phase}]{
		\includegraphics[scale=1]{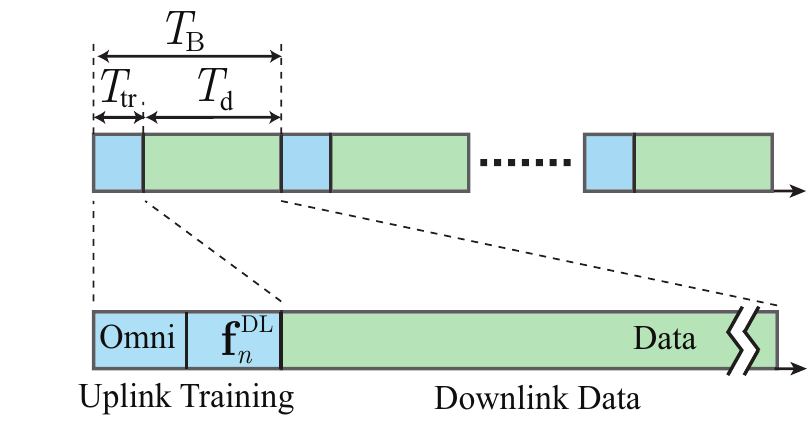}
		\label{fig:Op2}}
	\caption{This figure abstracts the timing diagram of the two phases of the proposed deep learning coordinated beamforming strategy. In the online learning phase, the BSs combine the uplink training pilot using both codebook beams and omni/quasi-omni patterns. In the deep-learning prediction phase, only omni-patterns are used to receive the uplink pilots. }\label{fig:SyS_Op}
\end{figure}

\textbf{Phase 1: Online learning phase:} 
In this phase, the machine learning model monitors the operation of the baseline coordinated beamforming system and trains its neural network. Specifically, for every beam coherence time $T_\mathrm{B}$, the user sends $N_\mathrm{tr}+1$ repeated uplink training pilot sequences $\left\{s_k^{\mathrm{pilot}}\right\}_{k=1}^K$. 
Similar to the baseline solution explained in \sref{subsec:BL}, every BS switches between its $N_\mathrm{tr}$ RF beamforming beams in the codebook $\boldsymbol{\cF}^\mathrm{RF}$ such that it combines every received pilot sequence with a different RF beamforming vector. The only difference is that every BS $n$ will also receive one additional uplink pilot sequence using an omni (or quasi-omni) beam, $\bg_0$, as depicted in \figref{fig:Op1x}, to obtain the received signal 
\begin{equation} \label{eq:o_rec}
r_{k,n}^{\text{omni}}= \bg_{0}^T \bh_{k,n} s^\mathrm{pilot}_k+ \bg_0^T \bv_{k,n}, \hspace{50pt} k=1, 2, ..., K.
\end{equation}

\begin{figure}[t]
	\centering
	\subfigure[center][{Online Learning Phase}]{
		\includegraphics[scale=1.05,trim={21pt 13pt 0pt 0},clip]{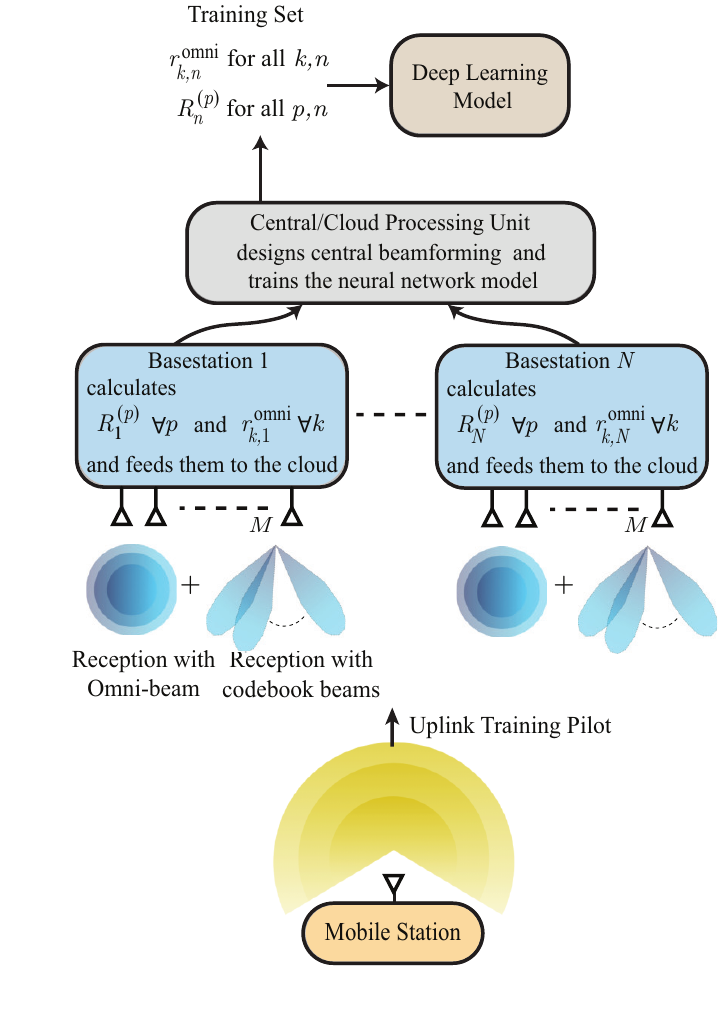}
		\label{fig:Op1x}}
	\subfigure[center][{Deep Learning Prediction Phase}]{
		\includegraphics[scale=1.05,trim={10pt 13pt 5pt 0},clip]{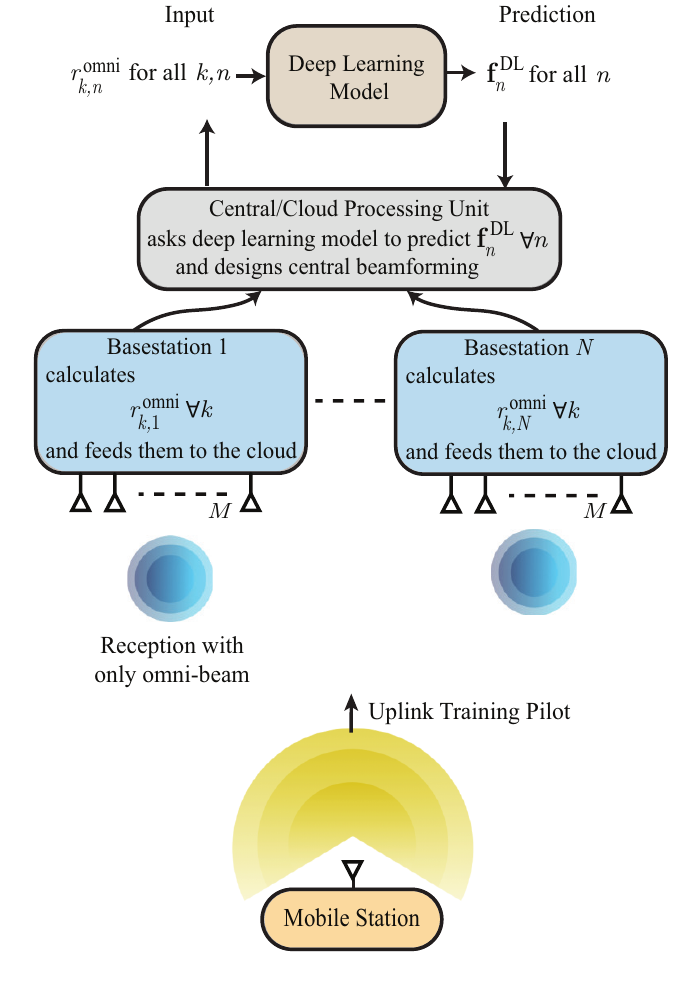}
		\label{fig:Op2x}}
	\caption{This figure illustrates the system operation of the proposed deep-learning coordinated beamforming solution, which consists of two phases. In the online learning phase, the deep-learning model leverages the signals received with both omni and codebook beams to train its neural network. In the deep learning prediction phase, the deep-learning model predicts the BS RF beamforming vectors relying on only omni-received signals, requiring negligible training overhead.}\label{fig:SyS_Opx}
\end{figure}

\noindent The combined signals $r_{k,n}^{\text{omni}}, r_{k,n}^{(p)}, p=1, ..., N_\mathrm{tr}, \forall k$ will be fed back from all the BS terminals to the cloud. The cloud performs two tasks. First, it selects the downlink RF beamforming vector for every BS according to \eqref{eq:BL_RF} and the baseband beamformers as in \eqref{eq:Sol1}, which is similar to the baseline solution in \sref{subsec:BL}. Second, it feeds the machine learning model with (i) the omni-received sequences from all the BSs $r_{k,n}^{\text{omni}}, \forall n$ which represent the inputs to the deep learning model, and (ii) the achievable rate of every RF beamforming vector $R^{(p)}_n, n=1,...,N_\mathrm{tr}$ defined as
\begin{equation} \label{eq:RF_Rate}
R^{(p)}_{n}= \frac{1}{K} \sum_{k=1}^K \log_2\left(1+  \mathsf{SNR}  \left|\bh_{k,n}^T \bg_p \right|^2 \right),
\end{equation}
which represent the desired outputs from the machine learning model, as will be described in detail in \sref{subsec:ML_model}.  The deep learning model is, therefore, trained online to learn the implicit relation between the OFDM omni-received signals captured jointly at all the BSs, which represent a defining signature for the user location/environment, and the rates of the different RF beamforming vectors.  Once the model is trained, the system operation switches to the second phase --- deep learning prediction. It is important to note here that using omni patterns at the BSs during the uplink training reduces the receive SNR compared to the case when combining the received signal with narrow beams. We show in \sref{subsec:Comm_Impact}, though, that this receive SNR with omni patterns is sufficient to efficiently train the neural networks under reasonable assumptions on the uplink training power.

\textbf{Phase 2: Deep learning prediction phase :}  
In this phase, the system relies on the trained deep learning model to predict the RF beamforming vectors based on \textit{only} the omni-received signals captured at the BS terminals. Specifically, at every beam coherence time, $T_\mathrm{B}$, the user transmits an uplink pilot sequence $\left\{s_k^{\mathrm{pilot}}\right\}_{k=1}^K$. The BS terminals combine the received signals using the omni (or quasi-omni) beamforming patterns $\bg_0$ used in the online learning phase. This constructs the combined signals $r_{k,n}^{\text{omni}}$ which are fed back to the cloud processor, as depicted in \fig{fig:Op2x}. Using these omni combined signals $r_{k,n}^{\text{omni}} \forall n, \forall k$, the cloud then asks the trained deep learning model to predict the best RF beamforming vector $\bff_n^\mathrm{DL}$ that maximizes the achievable rate in \eqref{eq:RF_Rate} for every BS $n$. Finally, the predicted RF beamforming vectors $\bff_n^\mathrm{DL}, n=1, ..., N$ are used by the BS terminals to combine the uplink pilot sequence, and to estimate the effective channels $\bh_{k,n}^T \bff_n^\mathrm{DL}, \forall k, n$, which are used to construct the cloud baseband beamforming vectors according to \eqref{eq:Sol1}. 

In the deep learning prediction phase, the system effective achievable rate $R^\mathrm{DL}_\mathrm{eff}$ is given by 
\begin{equation} \label{eq:EfFR_DL}
R_\mathrm{eff}^\mathrm{DL}=\left(1-\frac{2 T_\mathrm{p}}{T_\mathrm{B}}\right) \frac{1}{K} \sum_{k=1}^K \log_2\left(1+  \mathsf{SNR} \sum_{n=1}^N \left|\bh_{k,n}^T \bff^\mathrm{DL}_n \right|^2  \right),
\end{equation}
where the training time $2  T_\mathrm{p}$ represents the time spent for the uplink training of the omni pattern $\bg_0$ and the predicted beam $\bff_n^\mathrm{DL}$, each requiring one beam training pilot sequence time, $T_\mathrm{p}$. Note that we neglected the processing time of executing the deep learning model, as it is normally one or two orders of magnitude less than the over-the-air beam training time, $T_\mathrm{p}$.  It is also worth mentioning that, in general, the deep learning model can predict the best $N_\mathrm{B}$ beams for every BS to be refined in the uplink training, instead of just predicting the best beam, $\bff^\mathrm{DL}_n$. In this case, the training overhead will be $(N_\mathrm{B}+1) T_\mathrm{p}$, which will still be much smaller than the baseline training overhead, as $N_\mathrm{B}$ should typically be much smaller than $N_\mathrm{tr}$.

An important question is when will the system switch its operation from the first phase (learning) to the second phase (prediction)? During the learning phase, and thanks to the proposed system design, the cloud processor can keep calculating both the effective achievable rate of the baseline solution $R_\mathrm{eff}^\mathrm{BL}$, and the \textit{estimated} effective rate of the learning phase $R_\mathrm{eff}^\mathrm{DL}$. The system can then switch to the deep learning prediction phase when $R_\mathrm{eff}^\mathrm{DL} > R_\mathrm{eff}^\mathrm{BL}$. This also results in an overall effective achievable rate of $\max (R_\mathrm{eff}^\mathrm{BL}, R_\mathrm{eff}^\mathrm{DL})$. Note that this result implies that the deep learning model will only be leveraged when it can achieve a better rate than the baseline solution and that it has almost no cost on the system performance.  Finally, we assume for simplicity that the system will completely switch to the second phase after the deep learning model is trained. In practice, however, the system should periodically switch back to the online learning phase to ensure updating the learning model with any changes in the environment. Designing and optimizing this mixed system operation for  time-varying environment models is an interesting future research direction. 

%%%%%%%%%%%%%%%%%%%%%%%%%%%%%%%%%%%%%%%%%%%%%%%%%%%%%%%%%%%%%%%%%%%%%%%%%%%%%%%%%%
\subsection{Machine Learning Modeling} \label{subsec:ML_model}

In this subsection, we describe the different elements of the proposed machine learning model: (i) the input/output representation and normalization, (ii) the neural network architecture, and (iii) the adopted deep learning model. 
It is worth mentioning that the machine learning model presented in this section is just one possible solution for the integrated communication and learning system  proposed in \sref{subsec:Oper}, with no optimality guarantees on its performance or complexity. Developing other machine learning models with higher performance and less complexity is an interesting and important future research direction.

\textbf{Input representation and normalization:} As discussed in \sref{subsec:Oper}, the proposed deep learning coordinated beamforming solution relies on omni (or quasi-omni) received signals to predict distributed beamforming directions. Based on that, we propose to define the inputs to the neural network model as the OFDM omni-received sequences, $r_{k,n}^{\text{omni}}$, collected from the $N$ BSs. Since the sparse mmWave channel is highly correlated in the frequency domain \cite{Venugopal2017}, we will only consider a subset of the OFDM symbols for the inputs of the learning model. For simplicity, we will set the inputs of the model to be equal to the first $K_\mathrm{DL}$ samples, $r_{k,n}^{\text{omni}}, k=1,..., K_\mathrm{DL}$ of the K-point OFDM symbol. Note that inputing the raw data directly to the neural network without extracting further features is motivated by the ability of deep neural networks in learning the hidden and relevant features of the inputs \cite{Goodfellow-et-al-2016}. Finally, We represent every received signal $r_{k,n}^{\text{omni}}$ by two inputs, $\Re\left\{r_{k,n}^{\text{omni}}\right\}, \Im\left\{r_{k,n}^{\text{omni}}\right\}$, carrying the real and imaginary components of $r_{k,n}^{\text{omni}}$. Therefore, the total number of inputs to the learning model is $2 K_\mathrm{DL} N$, as depicted in \figref{fig:ML_model}.

\begin{figure}[t]
	\centerline{
		\includegraphics[width=.9\columnwidth]{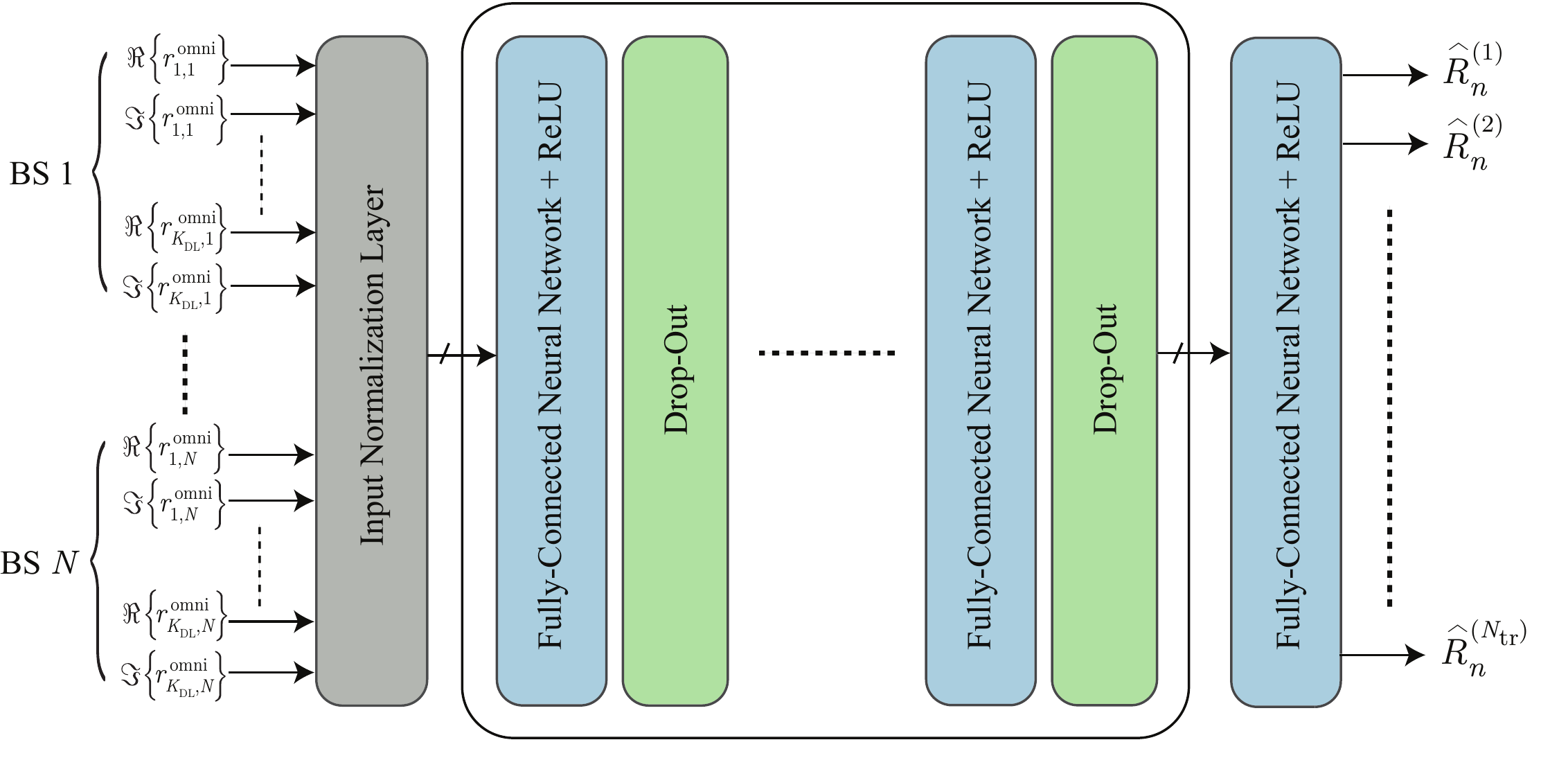}
	}
	\caption{A block diagram of the proposed machine learning model for the $n$th BS. The model relies on the OFDM omni-received sequences from the $N$ BSs to predict the $n$th BS achievable rate with every RF beamforming codeword.}
	\label{fig:ML_model}
\end{figure}

% Normalization

Normalizing the inputs of the neural network normally allows using higher learning rates and makes the model less affected by the initialization of the neural network weights and the outliers of the training samples \cite{Ioffe2015}. For our application, there are four main approaches in normalizing the model inputs: (i) per-carrier per-BS normalization, where we independently normalize every received signal $r_{k,n}^{\text{omni}}$ of every carriers and BS, (ii) per-BS normalization, where we apply the same normalization/scaling to all the carriers of the BS, but independently from the other BSs, (iii) per-sample normalization, where the $2 K_\mathrm{DL} N$ inputs of every learning sample are subject to the same normalization/scaling, and (iv) per-dataset normalization, where we only scale the whole dataset by a single factor.   

% Differentiate between scaling and normalization? 

In our coordinated beamforming application, the correlation between the received signals at the same BS may carry important information that will be lost if a per-carrier normalization is adopted. Similarly, the correlation between the signals received at different BSs from the same user may carry some information about the relative location and multi-path patterns for this user and every BS. This information will be distorted when using a per-BS normalization. Further, the correlation between the joint multi-path patterns at the $N$ BSs for different user locations may carry relevant information, which will be lost when using a per-sample normalization. Therefore, it is intuitive to adopt a per-dataset normalization in our coordinated beamforming application to avoid losing any information that could be useful for the learning model. This intuition is also confirmed by the simulation results in \sref{sec:Results}. In these simulations, we consider a simple per-dataset normalization where all the inputs are divided by a constant scaler $\Delta_{\mathrm{norm}}$, defined as 
\begin{equation}
\Delta_{\mathrm{norm}}=\max_{\substack{ k=1, ..., K_\mathrm{DL}, \\ n=1, ..., N, \\  s=1, ..., S}} I_{k,n,s},
\end{equation}
where $I_{k,n,s}$ denotes the absolute value of the omni-received signal $r_{k,n}^{\text{omni}}$ at the $n$th BS and $k$th subcarrier for the $s$th learning sample.

\textbf{Output representation and normalization:} As shown in \sref{sec:BL_Comm}, separating the BS RF and cloud baseband beamforming  design problems yields low-complexity yet highly-efficient systems, with achievable rates approaching the optimal bound in some important cases. With this motivation, we propose to have $N$ independent deep learning models for the $N$ BSs, where the objective of every model, $n, n= 1, ..., N$, is to predict the best RF beamforming vector $\bff_n^\mathrm{DL} \in \boldsymbol{\cF}_\mathrm{RF}$ with the highest data rate for the $n$th BS. Note that every model, $n$, will still rely on the omni-received sequences from the $N$ BSs to predict the beamforming vectors of BS $n$, as shown in \figref{fig:ML_model}. Further, every deep learning model has $N_\mathrm{tr}=\left|\boldsymbol{\cF}_\mathrm{RF}\right|$ outputs, each representing the predicted rate with one of the $N_\mathrm{tr}$ RF beamforming vectors. 

In the online learning phase, explained in \sref{subsec:Oper}, a new training sample for the deep learning models is generated every beam coherence time, $T_\mathrm{B}$. This training sample for the $n$th BS model consists of (i) the omni-received sequences $r_{k,n}^{\text{omni}}, \forall k, \forall n$, which are the inputs to the deep learning model, and (ii) the achievable rates, $R^{(p)}_{n}, p=1,..., N_\mathrm{tr}$, for the $N_\mathrm{tr}$ RF beamforming vectors, which represent the \textit{desired} outputs from the model. Note that both the omni-received sequences $r_{k,n}^{\text{omni}}, \forall k, \forall n$ and the achievable rates $R^{(p)}_{n}, \forall p, \forall n$ are constructed during the uplink training phase, as described in \sref{subsec:Oper}. These training samples are used by the cloud to train the deep learning models of the $N$ BSs. For the training of the $n$th BS model, $n, 1, ..., N$, the desired outputs $R^{(p)}_{n}, p=1,..., N_\mathrm{tr}$ of every training sample are normalized as 
\begin{equation}
\overline{R}^{(p)}_{n}=\frac{{R}^{(p)}_{n}}{\max_{p}{R}^{(p)}_{n}}.
\end{equation}
\noindent  The objective of this per-sample normalization is to regularize the deep neural network and make sure it does not learn only from the samples with higher data rates (higher output values). This is particularly important for mmWave systems where some user locations have LOS links (with high data rates) while others experience non-LOS connections (with much lower data rates). In this case, if the training samples are not normalized, the neural network model may learn only from the LOS samples, as will be illustrated in \sref{sec:Results}.

\textbf{Neural network architecture:} 
The main objective of this paper is to develop an integrated communication-learning coordinated beamforming approach for highly-mobile mmWave applications. Optimizing the deep neural network model, though, is out of the scope of this paper, and is one of the important future research directions. 
%, as will be discussed in \sref{sec:Future}. 
In this paper, we adopt a simple neural network architecture based on fully-connected layers. As shown in \figref{fig:ML_model}, the neural network architecture consists of $M_\mathrm{Layer}$ fully-connected layers, each with $M_\mathrm{Nodes}$ nodes. The fully-connected layers use rectifier linear units (ReLU) activations \cite{Goodfellow-et-al-2016}. Every fully-connected layer is followed by a drop-out layer to ensure the regularization and avoid the over-fitting of the neural network \cite{Srivastava2014}. The performance of the proposed deep learning coordinated beamforming solution with the adopted neural network architecture as well as comparisons with other network architectures will be discussed in \sref{sec:Results}.

\textbf{Loss function and learning model:}
The objective of the deep learning model is to predict the best RF beamforming vectors with the highest achievable rates for every BS. Therefore, we adopt a \textit{regression} learning model in which the neural network of every model $n, n=1, ..., N$, is trained to make its outputs, $\widehat{R}_{n}^{(p)}, p=1, ..., N_\mathrm{tr}$, as close as possible to the desired normalized achievable rates, $\overline{R}_{n}^{(p)}, p=1, ..., N_\mathrm{tr}$. Note that adopting a regression model enables the neural network to predict not only the best RF beamforming vector, but the second best, third best, etc. --- or generally, the best $N_\mathrm{B}$ RF beams. Formally, the neural network for every model $n$ is trained to minimize the loss function, $L_n\left(\boldsymbol{\theta}\right)$, defined as 
\begin{equation}
L_n\left(\boldsymbol{\theta}\right)=\sum_{p=1}^{N_\mathrm{tr}} \mathsf{MSE} \left(\overline{R}_{n}^{(p)},\widehat{R}_{n}^{(p)}\right),
\end{equation}
where $\mathsf{MSE}\left(\overline{R}_{n}^{(p)},\widehat{R}_{n}^{(p)}\right)$ is the mean-squared-error between $\overline{R}_{n}^{(p)}$ and $\widehat{R}_{n}^{(p)}$, and $\boldsymbol{\theta}$ denotes the set of all the parameters in the neural network. 
Note that the outputs of the learning model, $\widehat{R}_{n}^{(p)}, \forall p$, are functions of the network parameters $\boldsymbol{\theta}$ and the model inputs $r_{k,n}^{\text{omni}}, \forall k, \forall n$. To simplify the notation, though, we dropped these dependencies from the symbol $\widehat{R}_{n}^{(p)}, \forall p$.
		
%%%%%%%%%%%%%%%%%%%%%%%%%%%%%%%%%%%%%%%%%%%%%%%%%%%%%%%%%%%%%%%%%%%%%%%%%%%%%%%%%%
\subsection{Effective Achievable Rate and Mobility Support}

As shown in \sref{sec:BL_Comm}, the achievable rate with the baseline coordinated beamforming solution approaches the optimal bound in some special yet important cases. The challenge with the baseline solution, though, is the requirement of exhaustive beam training which consumes a lot of training resources and significantly reduces the effective achievable rate. For the deep learning coordinated beamforming solution, the learning model is trained to approach the achievable rate of the baseline solution, which is optimal in some cases. Further, it requires only two training resources for the omni pattern and predicted beam training, which makes its training overhead almost negligible. This means that the proposed deep learning coordinated beamforming solution, when efficiently trained, can approach the optimal effective achievable rate, $R^\star$, and support highly-mobile mmWave applications, as will be shown in the following section.

%%%%%%%%%%%%%%%%%%%%%%%%%%%%%%%%%%%%%%%%%%%%%%%%%%%%%%%%%%%%%%%%%%%%%%%%%%%%%%%%%%%%%%%%%%%%%%%%%%%%%%%%%
\section{Simulation  Results} \label{sec:Results}
%%%%%%%%%%%%%%%%%%%%%%%%%%%%%%%%%%%%%%%%%%%%%%%%%%%%%%%%%%%%%%%%

In this section, we evaluate the performance of the proposed  coordinated deep-learning beamforming solution, and illustrate its ability to support highly-mobile mmWave applications. First, we present the considered simulation setup in \sref{subsec:Setup}. Then, we show the capability of the proposed deep learning solution in predicting the beamforming directions and approach the optimal \textit{effective} achievable rate in \sref{subsec:Main_Result}. In Sections \ref{subsec:Comm_Impact} - \ref{subsec:ML_impact}, we study the impact of the main communication and machine learning parameters on the system performance. Finally, Sections \ref{subsec:Adapt} - \ref{subsec:sync} investigate several important aspects of the integrated communication/learning beamforming system such as its ability to adapt with the environment, its sensitivity to BSs synchronization, and its performance with untrained scenarios.

%%%%%%%%%%%%%%%%%%%%%%%%%%%%%%%%%%%%%%%%%
\subsection{Simulation Setup} \label{subsec:Setup}

This section describes in detail the various aspects of the considered simulation setup including the communication system/channel models, the machine learning model, and the simulation scenarios. While the coordinated beamforming strategies proposed in this paper are general for indoor/outdoor applications, we focus in these simulation results on the vehicular application, which is one important use case for 5G cellular systems \cite{Choi2016,5GPPP}. 

\begin{figure}[t] 
	\centering
	\subfigure[center][]{
		\includegraphics[width=.47\columnwidth]{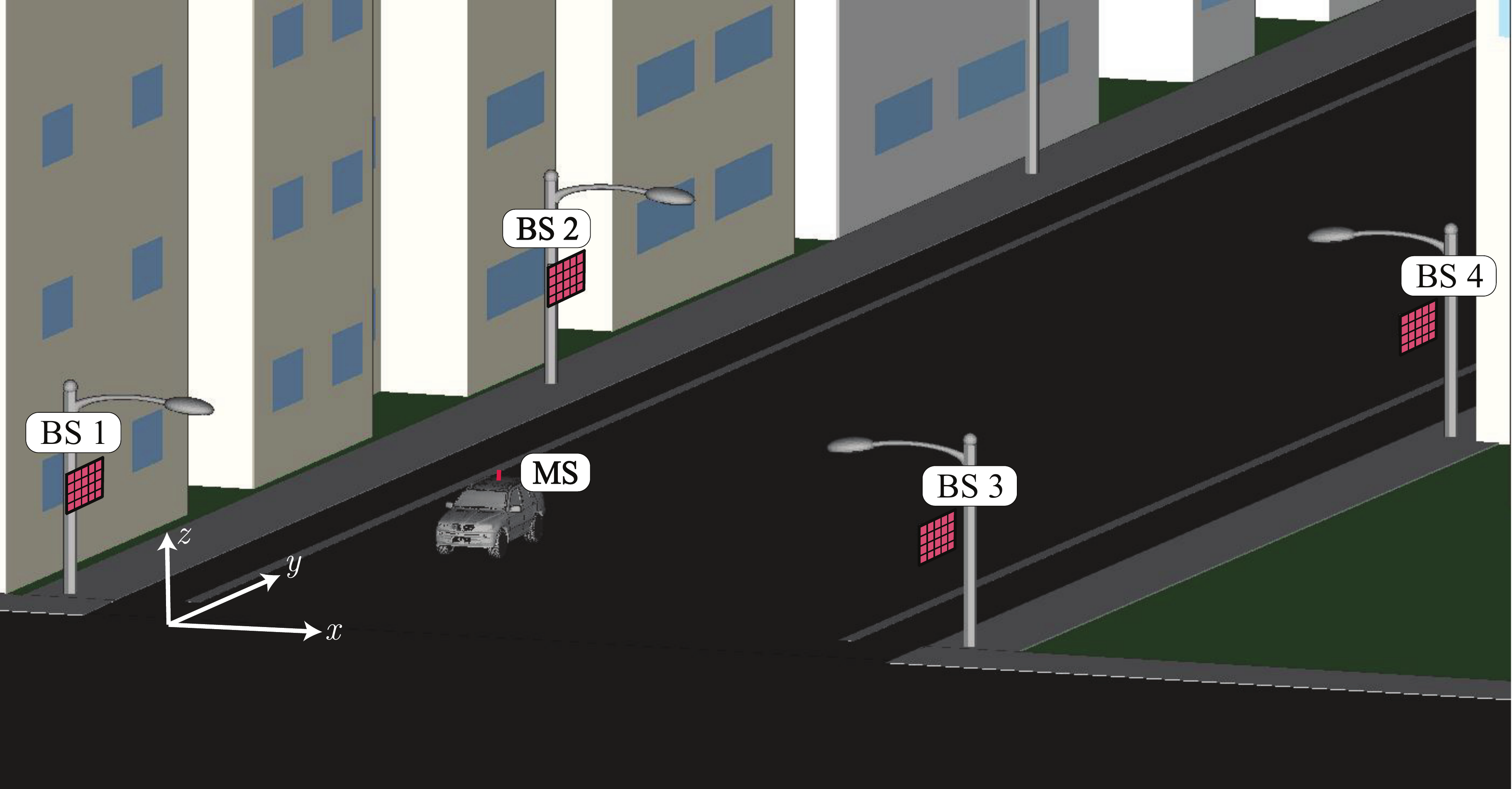}
		\label{fig:setup1}}
	\subfigure[center][]{
		\includegraphics[width=.47\columnwidth]{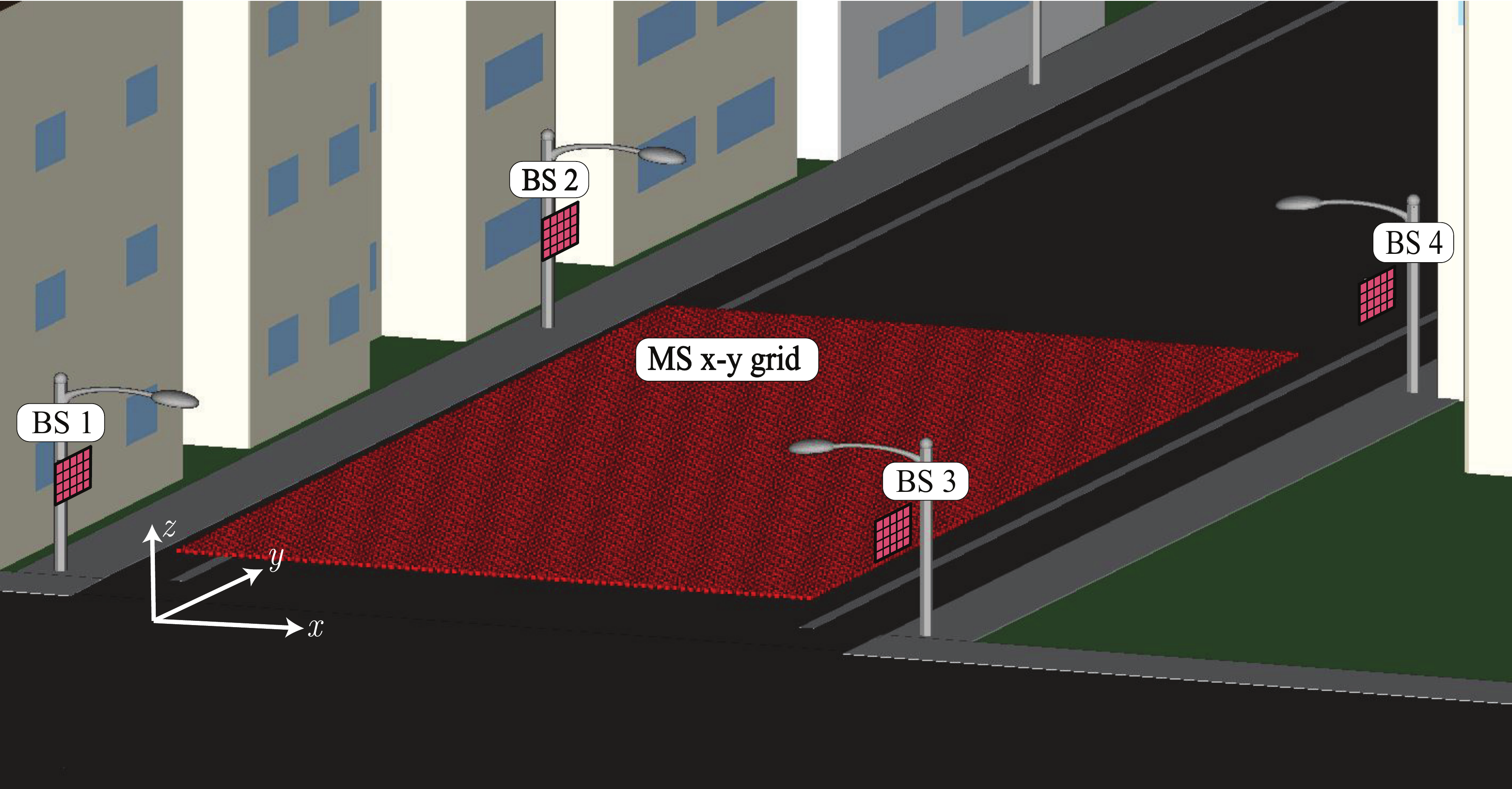}
		\label{fig:setup2}}
	\caption{Figure (a) illustrates the considered street-level simulation setup where 4 BSs, each has UPA, are serving one single-antenna vehicular mobile user. Figure (b) shows the rectangular x-y grid which represents the candidate locations of the mobile user antenna. }
	\label{fig:setup_LOS}
\end{figure}

\noindent \textbf{System Setup and Channel Generation:} 
We adopt the mmWave system and channel models in \sref{sec:Model}, where a number of BSs are simultaneously serving one mobile user over the $60$ GHz band. Since the proposed   deep-learning coordinated  beamforming approach relies on learning the correlation between the transceiver locations/environment geometry and the beamforming directions, it is important to generate realistic data for the channel parameters (AoAs/AoDs/pathloss/delay/etc.).  With this motivation, our simulations use the commercial ray-tracing simulator, Wireless InSite \cite{Remcom}, which is widely used in mmWave research \cite{Yang2006,Va2017a,Li2015a}, and is verified with channel measurements \cite{Wu2016,Li2015b}. In the following points, we summarize the environment/system setup and channel generation. 
\begin{itemize}
	\item \textbf{Environment setup:}
	We consider the system model in \sref{sec:Model} in a street-level environment, where $N=4$ BSs are installed on $4$ lamp posts to simultaneously serve one vehicular mobile user, as depicted in \figref{fig:setup1}.  The 4 lamp posts are located on the corners of a rectangle, with $60$m distance between the lamp posts on each side of the street (along the y-axis), and $50$m distance between the lamp posts across the street (along the x-axis). In the ray-tracing, we use ITU 60 GHz 3-layer dielectric material for the buildings,  ITU 60 GHz single-layer dielectric for the ground, and ITU 60 GHz glass for the windows. This ensures that the important ray-tracing parameters, such as the reflection and penetration coefficients, accurately model the mmWave system operational frequency.

	\item \textbf{Base stations setup:}
	Each BS is installed on one lamp post at height 6 m, and has a uniform planar array (UPA) facing the street, i.e., on the y-z plane. Unless otherwise mentioned, the BS UPAs consist of 32 columns and 8 rows resulting in a total of $M=256$ antenna elements, and use $30$dBm transmit power. Adopting the system model in \sref{sec:Model}, the BSs are assumed to be connected with a central processing via error-negligible delay-negligible links. In practice, this can be realized using optical fiber links connecting the four BSs together, with one of them hosting the central processor.

	\item \textbf{mobile user setup:}
	The vehicular mobile user has a single antenna that is deployed at a height of 2m. We show the car in \figref{fig:setup1} only for illustration. This car, though, is not modeled in the ray-tracing simulations, which only consider the mobile user antenna. At every beam coherence time, the location of the mobile user antenna is randomly selected from a uniform x-y grid of candidate locations, as depicted in \figref{fig:setup2}. The x-y rectangular grid has dimensions $40$m $\times 60$m with a resolution of $0.1$ m, i.e., a total of 240 thousand points. This x-y grid shares the same center with the rectangle defined by the 4 BSs.  During the uplink training, the MS is assumed to use $30$dBm transmit power. 

	\item \textbf{Ray-tracing based channel generation:}
	In our simulations, we adopt the frequency-selective \textit{geometric} channel model in \sref{sec:ChModel}. For this model, the important question is how to generate the channel parameters, such as the AoAs, AoDs, path gains and delays of each ray. We normally resort to stochastic models in generating these parameters \cite{Alkhateeb2016c,HeathJr2016,Sohrabi2015}. In this paper, though, the key idea is to leverage the deep neural network power in learning the mapping between the omni-received multi-path signatures and the beamforming directions. This implicitly relies on learning the underlying environment geometry and the interplay between this geometry and the transmitter/receiver locations. Therefore, it is crucial to generate realistic channel parameters that correspond to real environment geometry. This is the main motivation for using ray-tracing in generating the channel parameters. 
	
	In the Wireless InSite ray-tracing \cite{Remcom}, we use the X3D model with Shooting and Bouncing Ray (SBR) tracing mode. In this mode, the simulator shoots hundreds of rays from the transmitters and select the ones that find paths to the receiver for which it generates the key parameters (AoAs/AoDs/etc.). Considering the ray-tracing channel parameters for the strongest 25 paths, which normally have power gap more than 20dB, we construct the channel matrix between each BS and mobile user using MATLAB, according to \eqref{eq:d-channel}. The considered setup adopts  an OFDM system of size $K=1024$. Note that for every  candidate user location in the x-y grid, we generate $4$ channel vectors which correspond to the channels between this user and the 4 BSs. 
\end{itemize}

\noindent \textbf{Coordinated Beamforming:}
In the simulation results, the beamforming vectors are constructed as described in Sections \ref{sec:BL_Comm}-\ref{sec:DL_Framework}. At every beam coherence time, a new user location is selected, and the channel vectors $\bh_{k,n} \forall k$ are constructed based on the parameters generated from the ray-tracing simulations as described earlier in this section. For the baseline coordinated beamforming, we first simulate the uplink beam training at each BS $n$ by calculating $|\bh_{k,n}^T \bg_p| \forall k$ for all the beamforming vectors $\bg_p$ in the codebook $\boldsymbol{\cF}$. Then, the best RF beamforming vector for every BS is determined based on \eqref{eq:BL_RF}. Finally, the effective achievable rate  is calculated according to  \eqref{eq:EfFR_BL}. In these simulations, we consider an oversampled beamsteering codebook of $N_\mathrm{tr}=M_y M_z N_{\mathrm{OS},y} N_{\mathrm{OS},z}$ beams, with $M_y, M_z$ denoting the number of columns and rows of the BSs UPAs, and $N_{\mathrm{OS},y}, N_{\mathrm{OS},z}$ defining the oversampling factors in the azimuth and elevation  directions. The $p$th beamforming vector in this codebook is expressed as $\bg_p=\ba^*(\overline{\theta}_p, \overline{\phi}_p), p=1, ..., N_\mathrm{tr}$, where $\ba(\overline{\theta}_p, \overline{\phi}_p)$ is the UPA array steering vector with the quantized angles $\overline{\theta}_p, \overline{\phi}_p$. 

The simulation of the deep-learning coordinated beamforming approach is similar to the baseline coordinated beamforming with the following extra steps. First, at every beam coherence time, $T_\mathrm{B}$, i.e., a new user location, in addition to calculating $|\bh_{k,n}^T \bg_p| \forall k$ for all the beams, we also calculate the omni-received sequences $r_{k,n}^{\text{omni}} \forall k$ in \eqref{eq:o_rec}. To do that, we consider the signal received by only the first antenna element, which is equivalent to adopting a beamforming vector $\bg_0=[1, 0, ..., 0]$ in \eqref{eq:o_rec}. For the noise term in \eqref{eq:o_rec}, we add random noise samples taken from $\mathcal{N}_\mathbb{C}(0,N_0)$ with the noise power $N_0$ corresponding to $1$ GHz system bandwidth and $5$ dB noise figure. The omni received sequence from the $N$ BSs and the rate corresponds to every BF vector, calculated based on \eqref{eq:RF_Rate}, form one data point for the machine learning model. By randomly picking $N_\mathrm{DL}$ user locations, we  build an $N_\mathrm{DL}$-point dataset for the machine learning model. In the second phase of the deep-learning coordinated beamforming approach, we simulate the uplink training by only calculating the omni-received sequence $r_{k,n}^{\text{omni}} \forall k$. We then use the machine learning model to predict the best RF beamforming vector $\bff_n^\mathrm{DL}$ for every BS n. Finally, the effective achievable rate is calculated using \eqref{eq:EfFR_DL}.
\begin{figure}[t]
	\centerline{
		\includegraphics[width=.75\columnwidth]{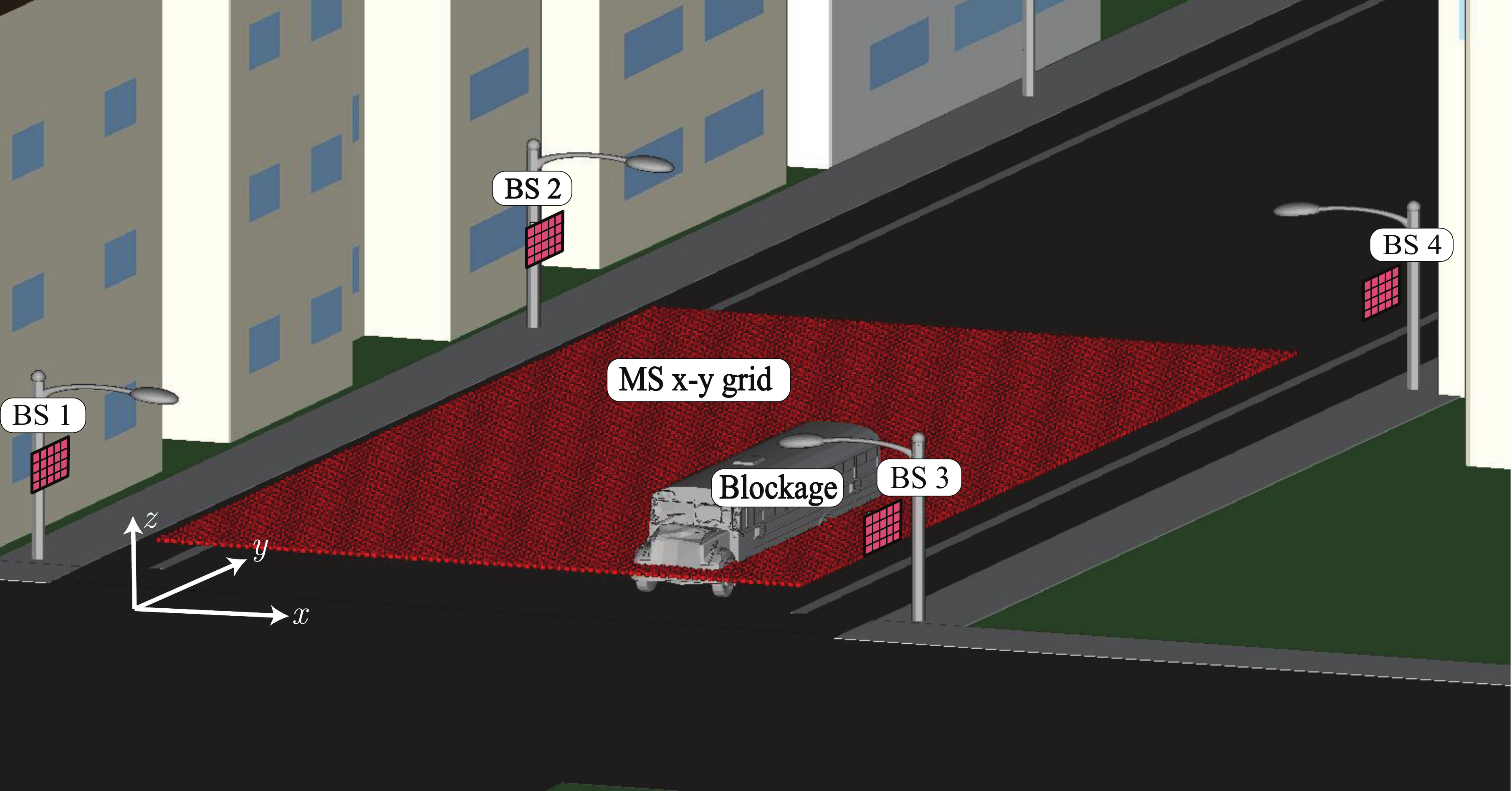}
	}
	\caption{This figure illustrates the non-LOS setup where a bus is blocking the LOS path between most of the possible locations for the mobile user antenna and the UPA of the 3rd BS.}
	\label{fig:setup_NLOS}
\end{figure}

\noindent \textbf{Machine Learning Model:} 
We consider the deep learning model described in detail in \sref{subsec:ML_model}. The neural network model of every BS has $2 N K_\mathrm{DL}$ inputs, which are the the real and imaginary components of the omni-received sequences $r_{k,n}^{\text{omni}}, k=1, ..., K_\mathrm{DL}$ of the $N$ BSs, and $N_\mathrm{tr}$ outputs, which represent the achievable rates $R_n^{\left(p\right)}, \forall p$ of the RF candidate beamforming vectors. Unless otherwise mentioned, the neural network model has $6$ fully connected layers, each of $2 N K_\mathrm{DL} = 512$ nodes, i.e., with $K_\mathrm{DL} =64$. The fully-connected layers use ReLU activation units and every layer is followed by a drop-out regulation layer of dropout rate $.5 \%$. For training the model, we use a dataset with a maximum size of $N_\mathrm{DL}=240$ thousand samples and a batch size of $100$. In the deep learning experimental work, we used the \emph{Keras} libraries \cite{Branchaud-Charron} with a \emph{TensorFlow} \cite{Abadi2016} backend.

\noindent \textbf{LOS and NLOS Scenarios:}
In order to evaluate the performance of our proposed deep-learning coordinated beamforming solution in rich mmWave environment with blockage, we consider both LOS and NLOS scenarios in the simulations. Earlier in this section, we described the LOS scenario, which is depicted in \figref{fig:setup_LOS}. The NLOS scenario is similar to the LOS one but with a large bus of dimensions 20 m x 5 m standing in front of BS 3, as shown in \figref{fig:setup_NLOS}. This bus blocks the LOS path between BS 3 and most of the candidate user locations in the x-y grid.  

Next, we evaluate the performance of the proposed deep learning coordinated beamforming solution for various communication and machine learning parameters.

%%%%%%%%%%%%%%%%%%%%%%%%%%%%%%%%%%%%%%%%%
\subsection{Does the System Learn How to Beamform?} \label{subsec:Main_Result}

\begin{figure}[t]
	\centerline{
		\includegraphics[width=.75\columnwidth,height=250pt]{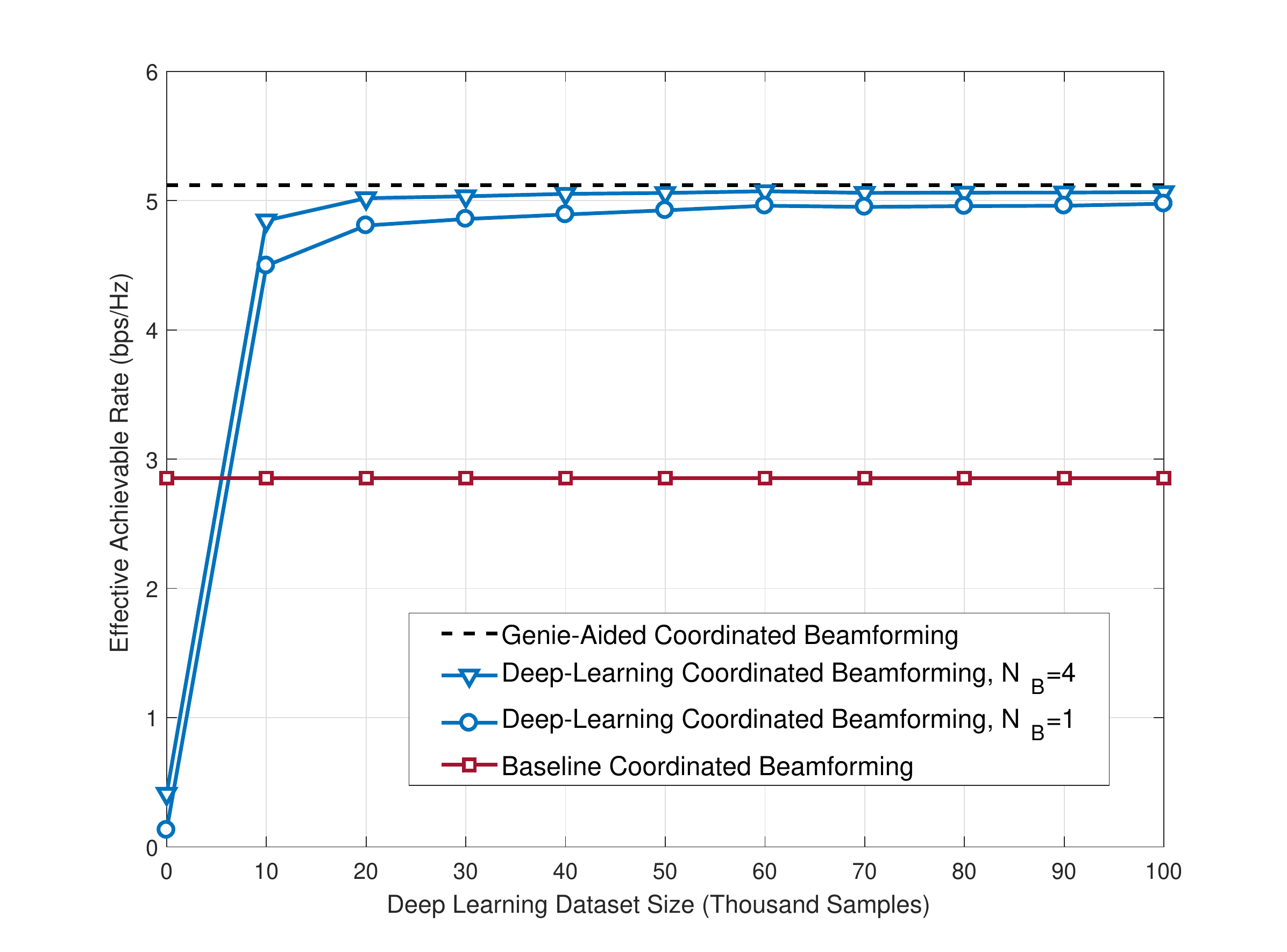}
	}
	\caption{The effective achievable rate of the proposed deep-learning coordinated beamforming solution compared to the baseline coordinated beamforming and the upper bound, $R^\star$. The figure considers a LOS scenario with $4$ BS, each with 32 x 8 UPA, serving one vehicle moving at speed $30$ mph. This figure shows that as we train the neural network model (with more dataset sizes), the performance of the deep-learning coordinated beamforming approaches the optimal effective achievable rate.}
	\label{fig:Main_LOS}
\end{figure}

% -----------------------------------------------
The proposed deep-learning coordinated beamforming solution relies on the ability of  deep neural networks in learning the relation between the multi-path signatures collected jointly at multiple BS locations and the RF beamforming vectors. The first question that we need to address then is whether these networks are successfully  learning how to select the optimal RF beamforming vectors, with the optimality defined according to  \eqref{eq:BL_RF}. To answer this question and to evaluate the quality of this learning, we plot the effective achievable rate of the proposed deep-learning coordinated beamforming for different training dataset sizes in Figures \ref{fig:Main_LOS} and \ref{fig:Main_NLOS}.

\begin{figure}[t]
	\centerline{
		\includegraphics[width=.75\columnwidth,height=250pt]{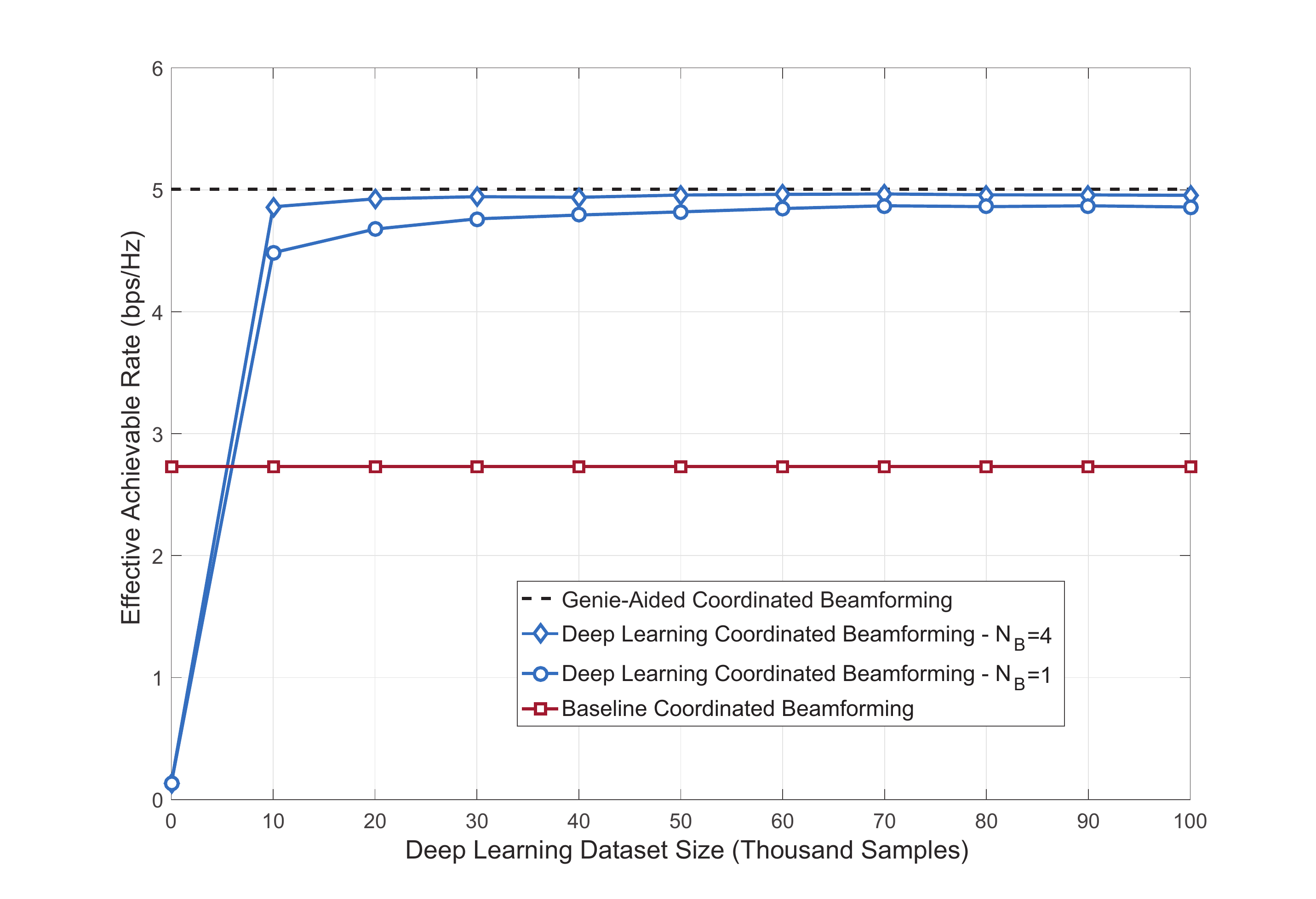}
	}
	\caption{This figure compares the effective achievable rate of the proposed deep-learning and baseline coordinated beamforming solutions with respect to the upper bound, $R^\star$. The figure adopts the NLOS scenario described in \sref{subsec:Setup} where $4$ BS, each with 32 x 8 UPA, serving one vehicle moving with speed $30$ mph. The effective achievable rate of the proposed deep-learning coordinated beamforming approaches the upper bound as larger dataset size is considered, i.e., with more time spent in training the neural network model.} 
	%[XXX This figure will be rechecked]}
	\label{fig:Main_NLOS}
\end{figure}

In \figref{fig:Main_LOS}, we consider the LOS scenario, described in \sref{subsec:Setup}, where $4$ BSs, each with $32 \times 8$ UPA  are simultaneously serving one mobile user, moving with speed $30$ mph. The BSs use beamsteering codebook with oversampling factor of 2 at both the azimuth and elevation directions. For this scenario, we plot the effective achievable rate of the proposed deep-learning coordinated beamforming solution in \figref{fig:Main_LOS} versus the size of the dataset used in training the neural network model. Recall that every point in the training dataset is collected in one beam coherence time, $T_\mathrm{B}$. This means that if the system spent time equals, for example, to $10000  T_\mathrm{B}$ in training its neural network model, then it will be able to predict the beamforming vectors that achieves the effective rate corresponding to the dataset size $10$k samples in \figref{fig:Main_LOS}.  This figure  shows that the effective achievable rate of the proposed deep-learning coordinated beamforming approaches the optimal rate $R^\star$, defined in Lemma \ref{lem:1} with reasonable dataset sizes. \textbf{This means that the neural network model is successfully predicting the best RF beamforming vector, out of 1024 candidate beams, for every BS using multi-path signatures received with only a single antenna (or omni-pattern) at every BS}. This clearly illustrates the ability of the proposed deep-learning based solution in supporting highly-mobile mmWave applications with negligible training overhead.  \figref{fig:Main_LOS} also shows that it is better to select the best $N_\mathrm{B}=4$ beams predicted by the neural network and refine them through beam training, as described in \sref{subsec:Oper}. Finally,  \figref{fig:Main_LOS} illustrates that leveraging deep learning can achieve considerable data rate gains compared to the baseline coordinated beamforming solution.

In \figref{fig:Main_NLOS}, we adopt the NLOS scenario described in \sref{subsec:Setup}, where a large bus is standing in front of BS 3, as shown in \figref{fig:setup_NLOS}. The system, channel, and machine learning models are identical to those adopted in \figref{fig:Main_LOS}. For this NLOS scenario, \figref{fig:Main_NLOS} compares between the effective achievable rate of (i) the developed deep-learning coordinated beamforming strategy with $N_B=1, N_B=4$, (ii) the baseline coordinated beamforming, and (iii) the upper bound, $R^\star$, for different training dataset sizes. \textbf{The result in this figure is very important as it shows that the deep learning model can learn not only LOS beamforming, but also predicting best NLOS beamforming vectors given the joint multi-path signatures}. Note that this is a key advantage of our proposed deep learning solution that relies on the multi-path signature, not on the user location/coordinates, in predicting the beams. If the system relies only on the knowledge of the user location, it will not be able to efficiently predict the beamforming vectors in NLOS scenarios, as the same user location may correspond to different NLOS setups and, consequently, different beamforming vectors.

%%%%%%%%%%%%%%%%%%%%%%%%%%%%%%%%%%%%%%%%%
\subsection{Impact of Communication System Parameters} \label{subsec:Comm_Impact}
The main motivation for the deep-learning coordinated beamforming solution is supporting \textbf{highly-mobile} applications in \textbf{large-array} mmWave systems. In achieving that, our proposed deep learning model makes beamforming predictions based on signals received with only omni or quasi omni antennas, i.e., with \textbf{low-SNR}. In this section, we evaluate the impact of the key system parameters, namely the user mobility, the number of BS antennas, and the uplink transmit power, on the performance of the developed deep-learning coordinated beamforming strategy.

\begin{figure}[t]
	\centerline{
		\includegraphics[width=.75\columnwidth]{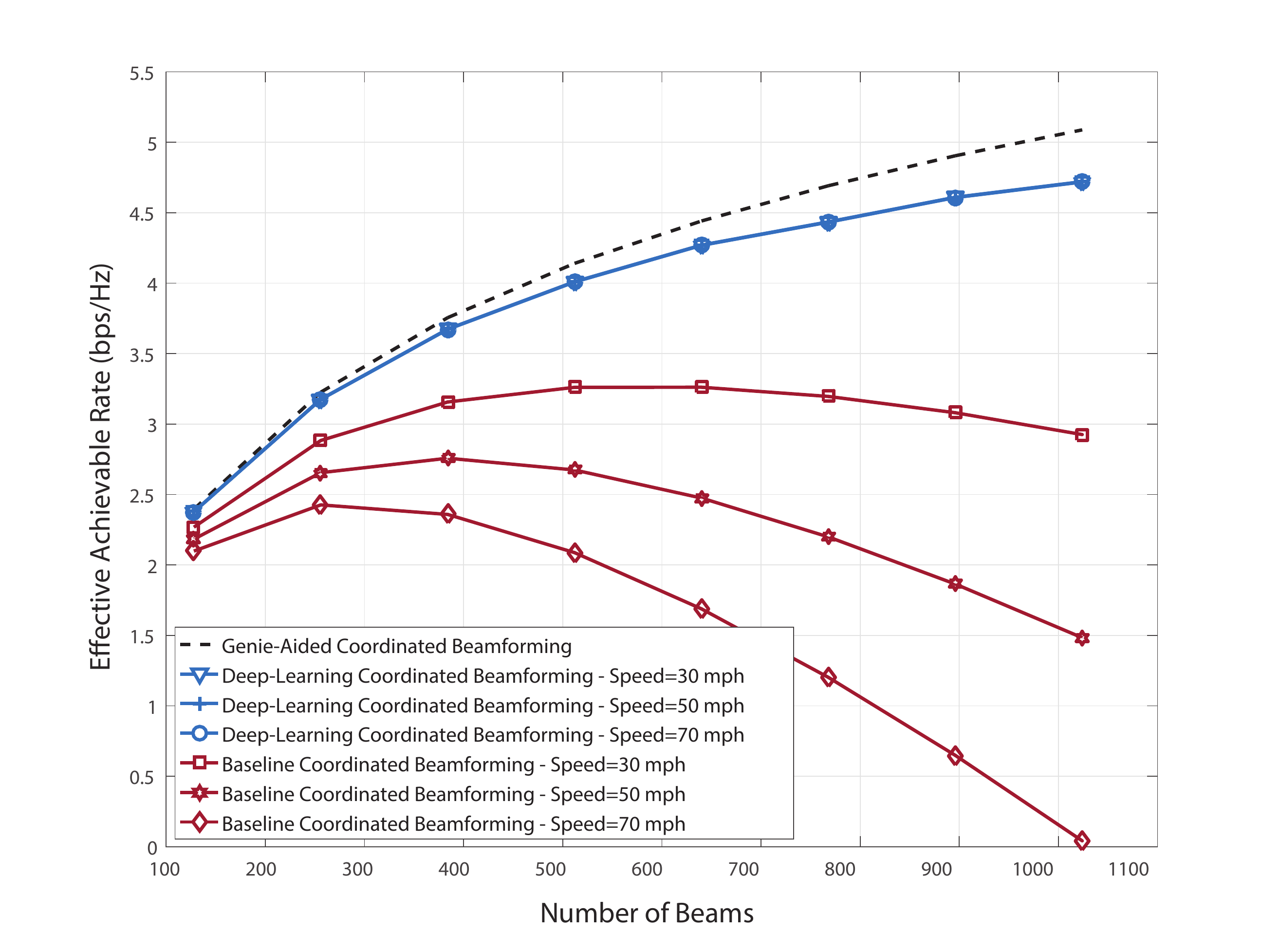}
	}
	\caption{The figure compares the effective achievable rate of the deep-learning coordinated beamforming with the baseline coordinated beamforming and the upper bound, $R^\star$, for different values of BS antennas and user speeds. The deep-learning model was trained with a LOS dataset of size $20$k samples. The performance of the proposed deep-learning coordinated beamforming is nearly as good as the upper bound even for large arrays and highly-mobile users.}
	\label{fig:Speed}
\end{figure}

%---------------------------------------
\textbf{Impact of User Speed and Number of BS antennas:} In \figref{fig:Speed}, we consider the LOS scenario, described in \sref{subsec:Setup}, with 4 BSs serving one mobile user. Each BS is assumed to have a UPA with $M_z=8$ rows, $M_y$ columns, and is using a beamsteering codebook with oversampling factor of 2 in both the  elevation and azimuth directions. In \figref{fig:Speed}, we plot the effective achievable rate of the deep-learning coordinated beamforming solution, the baseline coordinated beamforming, and the upper bound $R^\star$ for different number of BS antennas and user speeds. Recall that the number of beams in the beamsteering codebooks equals 4 (the overall azimuth/elevation oversampling factor) times the number of antennas. First, consider the baseline coordinated beamforming solution performance in \figref{fig:Speed}. As more antennas are deployed at the BSs, the beamforming gain increases but the training overhead also increases, resulting in a trade-off for the effective achievable rate in \eqref{eq:EfFR_BL}. This trade-off defines an optimal number of BS antennas for every user speed (or equivalently beam coherence time), as shown in \figref{fig:Speed}. It is important to note that the performance of the baseline coordinated beamforming solution degrades significantly with increasing the number of BS antennas or the user speed. This illustrates why traditional beamforming strategies are not capable of supporting highly-mobile users in mmWave systems with large arrays.

In contrast, \textbf{the deep-learning coordinated beamforming, which is trained with a dataset of size $20$k samples, achieves almost the same performance of the upper bound for different values of user speeds and BS antennas}. This is thanks to the negligible uplink training overhead using omni patterns. It is worth noting here that while larger arrays may require bigger datasets (longer time) for training the neural network model during the online learning phase, \textbf{the uplink training overhead in the deep learning prediction phase does not depend on the number of antennas as it relies on omni or quasi-omni patterns}. Therefore, once the neural network model is trained, the deep-learning coordinated beamforming solution works efficiently with large antenna arrays. This is a key advantage of our developed deep-learning based solution over traditional mmWave channel training/estimation techniques such as analog beam training \cite{Hur2011,Wang2009} and compressive sensing \cite{Alkhateeb2014,Ramasamy2012,Lee2014}.

\begin{figure}[t]
	\centerline{
		\includegraphics[width=.75\columnwidth,height=260pt]{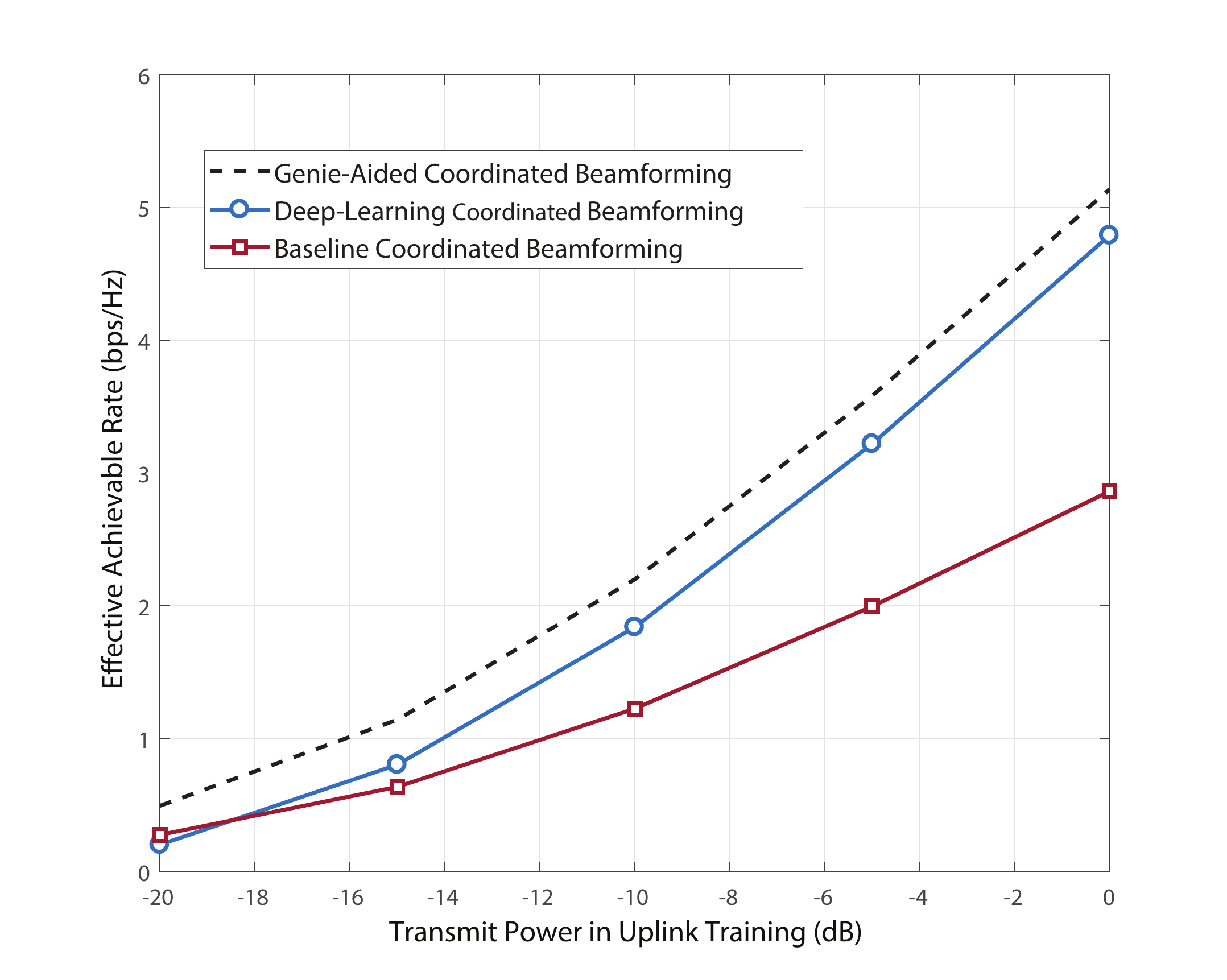}
	}
	\caption{The effective achievable rates of the proposed deep-learning and baseline coordinated beamforming solutions compared with the upper bound, $R^\star$ for different uplink transmit powers. The figure considers the LOS scenario in \sref{subsec:Setup} where 4 BSs are serving a user moving with speed $30$ mph. }
	\label{fig:SNR}
\end{figure}

%---------------------------------------
\textbf{Impact of Uplink Transmit Power and Omni Training Pattern:} 
An important aspect of the proposed deep learning coordinated beamforming solution is the use of only omni (or quasi-omni) beam patterns at the BSs during the uplink training. This raises, though, questions on whether the received signals with omni reception,  $r_{k,n}^{\text{omni}} \forall k$, which are the inputs to the neural network model, have sufficient SNR for the system operation, and whether the MS will need to use very high uplink transmit power to ensure enough receive SNR at the BSs. To answer these questions, we plot the effective achievable rates of the proposed deep learning solution, baseline solution, and optimal bound in \figref{fig:SNR} versus the uplink transmit power. We also assume that the downlink transmit power during data transmission by every BS equals the uplink transmit power from the MS. The rest of the communication system and machine learning parameters are similar to the setup in \figref{fig:Speed}. As shown in \figref{fig:SNR}, for low values of uplink transmit power, the performance of the deep-learning strategy is worse than the baseline solutions, as the SNR of the omni-received sequences is low and the learning model is not able to learn and predict the right beamforming vectors. \textbf{For reasonable uplink transmit power, though, $-10$dBm to $0$dBm, the deep-learning coordinated beamforming achieves good gain over the baseline solution. This means that the receive SNR with omni patterns during uplink training is sufficient to draw a defining RF signature of the environment and efficiently train the neural network model}.

It is important to note here that the main reason why we need to use beamforming during mmWave beam training or channel estimation is to estimate the directional information at every BS, such as the angles of arrival and departure, which we do not need in the proposed deep learning coordinated beamforming system that relies on predicting this information via deep learning using the signals captured at multiple distributed BSs.

%%%%%%%%%%%%%%%%%%%%%%%%%%%%%%%%%%%%%%%%%
\subsection{Impact of Machine Learning Parameters} \label{subsec:ML_impact}
The primary objective of this paper is to motivate leveraging machine learning tools in highly-mobile mmWave communication systems. Optimizing the machine learning model itself, though, is out of the scope of this paper, and is worthy for independent publications. In this section, we briefly highlight the impact of some machine learning parameters, such as the input/output normalization and the neural network architecture, on the system performance.  

\begin{figure}[t]
	\centerline{
		\includegraphics[width=.75\columnwidth,height=260pt]{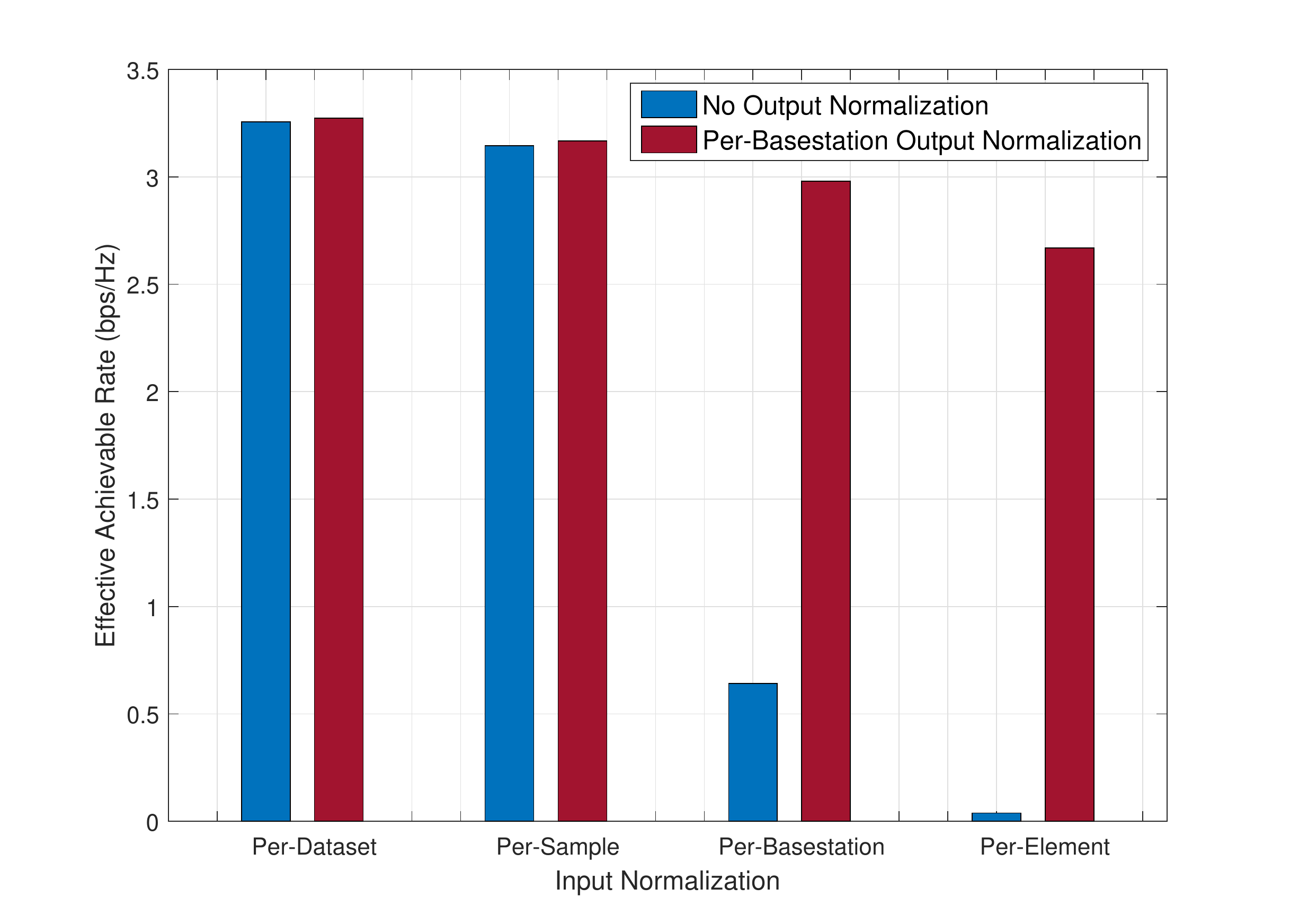}
	}
	\caption{The effective achievable rates for different input and output normalization strategies. This figure considers the NLOS scenario, described in \sref{subsec:Setup} with a deep learning model trained with a 20k samples dataset. The figure shows that per-dataset input normalization in the presence of per-BS output normalization achieves the higher effective rates compared to the other candidate strategies.}
	\label{fig:Norm}
\end{figure}

\textbf{Impact of Input and Output Normalization:}
%---------------------------------------
The proper normalization of the inputs and outputs of the neural network allows realizing efficient machine learning models with high learning rates, robustness against weight initialization biases, among other system gains. In \figref{fig:Norm}, we plot the effective achievable rates for different input normalization strategies, namely per-dataset, per-sample, per-basestation, and per-element normalization, which are explained in detail in \sref{subsec:ML_model}. This figure considers the NLOS scenario, described in \sref{subsec:Setup}, with BSs employing 16$\times$8 UPAs and with a deep-learning model trained using a 20k-samples dataset. As shown in \figref{fig:Norm}, the per-dataset normalization achieves the highest effective achievable rate among the four candidate strategies. To understand the intuition behind this performance, it is important to note that the correlation among the received signals at the different subcarriers of each BS may carry useful information, such as the distance between the user and the BS. Similarly, the correlation between the received signals of the same user at the 4 BSs and the correlation between the received signals at different user locations may carry logical information that helps the neural network model in learning the mapping between the multi-path signatures and the beamforming beams. \textbf{The per-dataset normalization is the only strategy, among the 4 candidates, that preserves all theses kinds of correlation. Therefore, it allows the machine learning model to leverage all the information carried by the training dataset.} 

%---------------------------------------

In \figref{fig:Norm}, we also plot the effective rates with and without per-BS output normalization. The normalization strategy is explained in \sref{subsec:ML_model}.  \figref{fig:Norm} shows that normalizing the outputs of the training dataset is required to achieve good data rates. To justify this performance, we first emphasize that these results consider the NLOS scenario in \figref{fig:setup_NLOS}. In this  scenario, some achievable rates, $R_n^{(p)} \forall p$ (the outputs of the machine learning model), correspond to NLOS links while others are results of LOS links. The challenge here is that the achievable rates corresponding to NLOS links have much smaller values compared to those of LOS links. Without output normalization, the training of the neural network weights will be dominated by the LOS-related outputs which have large differences between its candidate beams (the output bins). These weights will not be sensitive to the relatively small differences between the rates of the NLOS-related outputs. In other words, the machine learning model will only learn how to beamform to the users with LOS links. This draws insights into the importance of normalizing the outputs of the neural network training dataset.

\textbf{Impact of Network Architecture:}
%---------------------------------------
%
In the simulation results of this paper, we adopt the fully-connected neural network architecture in \figref{fig:ML_model}. For the sake of motivating the future research into optimizing the machine learning model, we compare the effective achievable rates of the fully-connected architecture and another architecture based on convolutional neural networks (CNN) in \figref{fig:Arch}. This figure adopts the LOS scenario with BSs employing $32 \times 8$ UPAs and steering codebooks with oversampling factor of 2 in the azimuth direction. The fully-connected architecture consists of $4$ layers with 512 nodes per layer. For the CNN-based architecture, it first applies four 2D $32 \times 2$ convolutional filters on two input channels representing the real and imaginary of the omni-received sequences $r_{k,n}^{\text{omni}} \forall k$. A max-pooling layer is then added and followed by three fully-connected layers with 512 nodes per layer. This results in a total of $\sim$754k parameters in the CNN-based architecture compared to $\sim$ 1048k parameters in the fully-connected architecture. Despite its lower complexity compared to the fully-connected architecture, the CNN architecture achieves almost the same effective spectral efficiency of the fully-connected model, as shown in \figref{fig:Arch}. One intuition for this efficient performance comes from the CNN dependence on extracting local information using  small-sized filters. In our model, these filters may capture the correlation between the adjacent samples in the OFDM sequence, which helps extracting valuable information with lower complexity compared to the brute-force approach in the fully-connected model.  This highlights the potential of exploring new neural network architectures for integrated learning/communication systems. 

%---------------------------------------
\begin{figure}[t]
	\centerline{
		\includegraphics[width=.75\columnwidth,height=260pt]{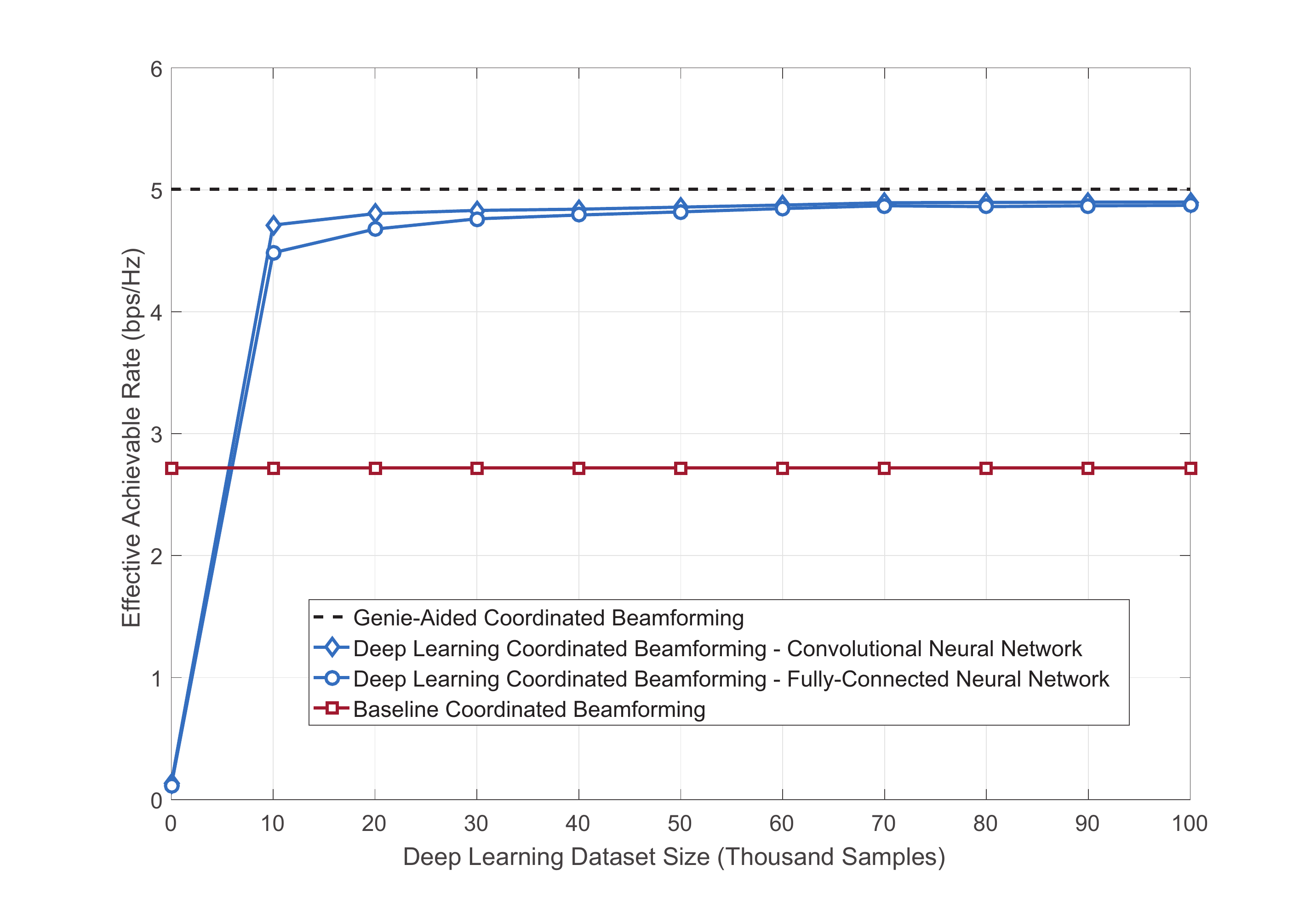}
	}
	\caption{The figure compares the effective achievable rates of the adopted fully-connected neural network architecture in \figref{fig:ML_model} and another architecture based on CNN. The results show the two architectures achieve almost the same effective data rates despite the potential complexity reduction in the CNN model.}
	\label{fig:Arch}
\end{figure}

%%%%%%%%%%%%%%%%%%%%%%%%%%%%%%%%%%%%%%%%%
\subsection{System Adaptability and Robustness} \label{subsec:Adapt}

\begin{figure}[t]
	\centerline{
		\includegraphics[width=.75\columnwidth]{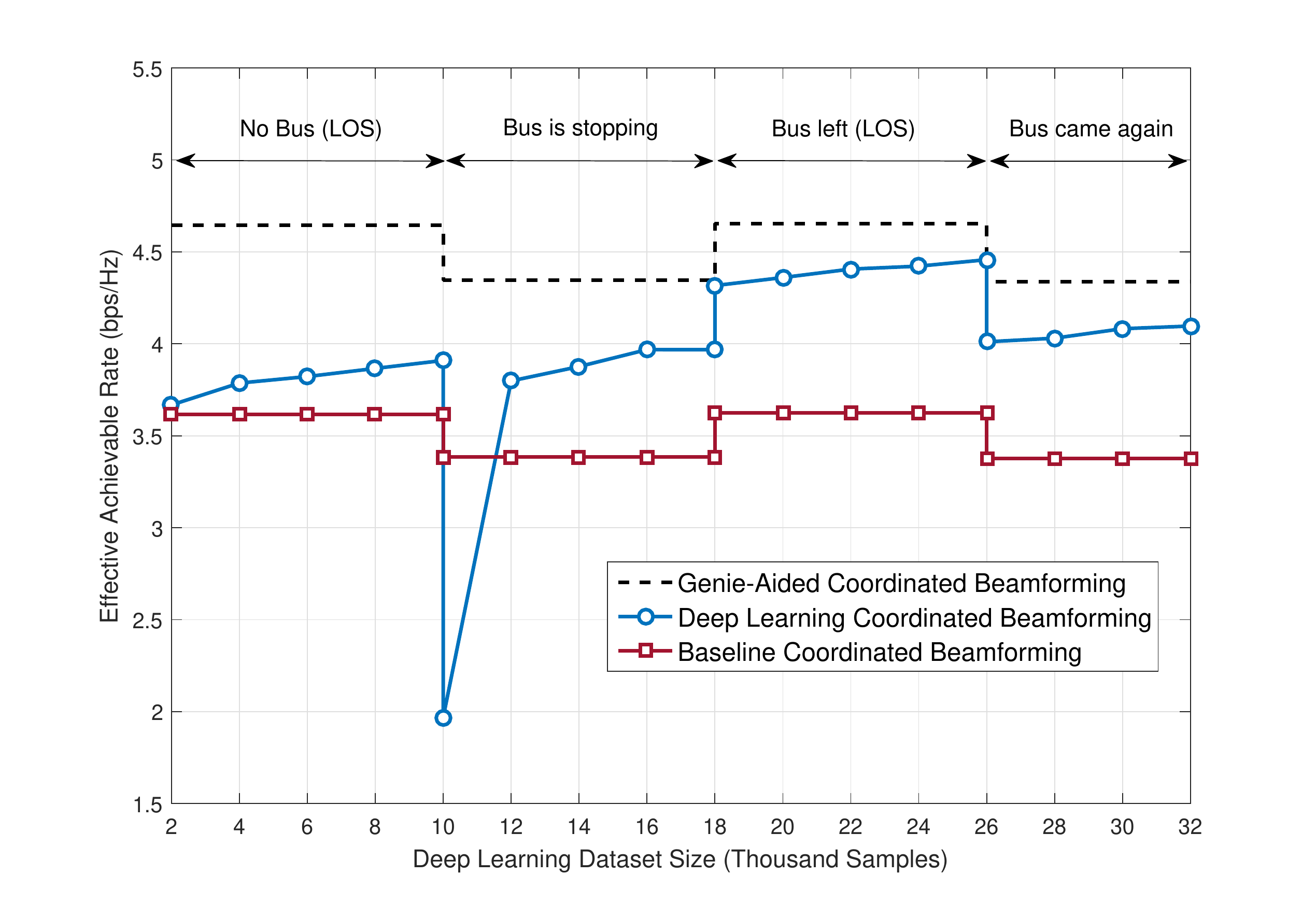}
	}
	\caption{The figure plots the effective achievable rates of the developed deep-learning and baseline coordinated beamforming as well as the upper bound, $R^\star$, for a setup where the environment is changing between LOS and NLOS scenarios. The figure illustrates that the deep learning model generalizes its learning over time to perform well at both LOS and NLOS scenarios.}
	\label{fig:Adapt}
\end{figure}

One main advantage of integrating machine learning in wireless communication is realizing robust systems that adapt efficiently to the highly-mobile aspects of the environment. 
To examine this gain, we plot the effective achievable rates in \figref{fig:Adapt} for an important setup where the environment changes multiple times as follows. 
\begin{itemize}
	\item First, when the system started working, at dataset size equals 0 samples, the LOS scenario in \figref{fig:setup_LOS} was considered where 4 BSs is serving a car  moving alone in the street. The BSs employ $32 \times 8$ UPAs and using beamsteering codebooks with oversampling factor of 2 in only the azimuth direction.
	\item After some time, which is spent to build a dataset of size $10$k samples, a large bus appeared suddenly and stopped in front of BS 3, as depicted in \figref{fig:setup_LOS}. Since the deep-learning model was trained only for the LOS scenario before the bus arrives, the effective achievable rate of the deep-learning coordinated beamforming solution degraded significantly at the first moment of the bus arrival. This is clear in the effective rate transition at dataset size $10$k samples in \figref{fig:Adapt}. Assuming that the bus parked in front of BS 3 for some time, the deep learning model started learning this new NLOS scenario. In other words, the neural network weights that were initially adjusted for the LOS dataset are now being refined again based on the new  NLOS training samples. 
	\item After more time, which is spent to build an overall dataset of size $18$k samples, the bus left. Interestingly, the performance of the proposed deep learning solution now did not degrade again, but rather did as well as the first stage (before the bus arrives). \textbf{This is very important as it shows that the deep learning model has generalized its learning to both the LOS and NLOS scenarios}, which is also confirmed by the performance of the deep-learning solution after the bus arrives again at the dataset size $26$k samples.   
\end{itemize}

\noindent The results in \figref{fig:Adapt} show that the coordinated beamforming system became more robust over time, and is able to adapt and perform well at both the LOS and NLOS scenarios. \textbf{More generally, this means that when we first deploy the deep-learning coordinated beamforming system in a new environment, it will experience many new scenarios, such as cars and pedestrian blocking the signals, trees growing, etc., for which the system was not trained. After some time, the model will generalize its learning to cover all these scenarios and develop into a robust and adaptable system.}

%%%%%%%%%%%%%%%%%%%%%%%%%%%%%%%%%%%%%%%%%
\subsection{Does the System Require Phase Synchronization to Learn?} \label{subsec:sync}

\begin{figure}[t]
	\centerline{
		\includegraphics[width=.75\columnwidth,height=260pt]{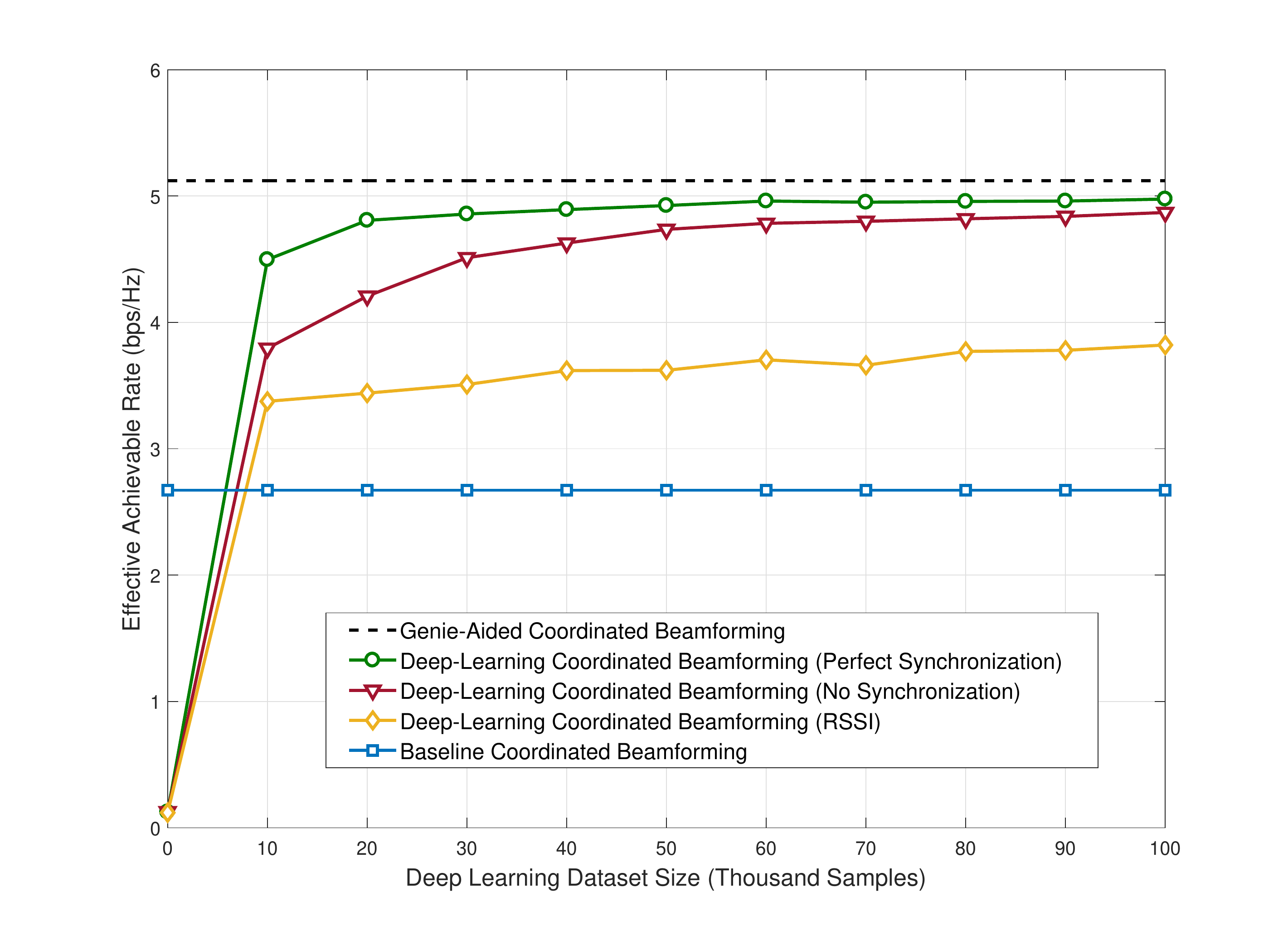}
	}
	\caption{The effective achievable rates of the proposed deep-learning coordinated beamforming solution are plotted for different phase synchronization assumptions. These rates are also compared with the baseline solution and the upper bound, $R^\star$. The figure shows that the deep-learning coordinated beamforming achieves good gain over the baseline solution even when only received signal strength indicators (no phase information) are used as inputs to the machine learning model.}
	\label{fig:Sync}
\end{figure}

The machine learning model, in the proposed deep-learning beamforming solution, relies on the signals received \textit{jointly} at multiple BSs. Therefore, the phase of these signals may intuitively carry useful information that helps the model in learning how to predict the beamforming for each multi-path signature. Maintaining this phase information, though, is difficult in practice as it requires perfect synchronization of the terminal BSs oscillators. In this section, we are interested in evaluating the performance of the proposed deep-learning coordinated beamforming solution in a setting where we relax the phase synchronization requirements. 

In \figref{fig:Sync}, we consider the LOS scenario in \sref{subsec:Setup}, and plot the effective achievable rates of the proposed deep learning coordinated beamforming solution under three different assumptions on the phase synchronization: (i) perfect phase synchronization where the clocks of 4 BSs are perfectly synchronized, (ii) no synchronization, where uniform random phase $\delta_n \in \left[0, 2 \pi\right]$ is added to the omni received signal at every BS $n$, and (iii) received signal strength indicators (RSSI), where only the amplitude of the omni received sequence, $|r_{k,n}^{\text{omni}} |,  \forall k,n$, is fed to the neural network model. As shown in \figref{fig:Sync}, the performance of the deep-learning coordinated beamforming with no phase synchronization approaches that with perfect phase synchronization as more time is spent in training the neural network (or equivalently large datasets are adopted). \textbf{This result is very useful for practical implementations as it means that the phase synchronization may not be needed to learn coordinated beamforming if large enough datasets are adopted}. \figref{fig:Sync} also illustrates that relying only on RSSI in deep-learning coordinated beamforming, which does not require any phase information,  still achieves a reasonable gain over the baseline coordinated beamforming solution. 

Finally, it is worth mentioning that while \figref{fig:Sync} shows that the machine learning model can learn well with no phase synchronization, both the baseline and the deep-learning coordinated beamforming solutions still need this synchronization in the downlink data transmission phase, as the signals from the 4 BSs need to add coherently at the mobile user antenna. This requirement though can be relaxed if the user is served with only one BS at a time. This way, the 4 BS coordinate the learning but only one of them beamform to the user at any given time. Clearly, these different approaches for coordinated beamforming have a trade-off between implementation complexity and system performance (data rate, reliability, etc.). Investigating this trade-off for practical systems is an interesting future research direction.

%%%%%%%%%%%%%%%%%%%%%%%%%%%%%%%%%%%%%%%%%%%%%%%%%%%%%%%%%%%%
\section{Conclusion} \label{sec:Conclusion}
%%%%%%%%%%%%%%%%%%%%%%%%%%%%%%%%%%%%%%%%%%%%%%%%%%%%%%%%%%%%
In this paper, we developed an integrated machine learning and coordinated beamforming strategy that enables highly-mobile applications in large antenna array mmWave systems. The key idea of the developed strategy is to leverage a deep learning model that learns the mapping from omni-received uplink pilots and the beam training result. This is motivated by the intuition that the signal received at multiple distributed BSs renders an RF defining signature for the user location and its interaction with the surrounding environment.  The proposed solution requires negligible training overhead and performs almost as good as the genie-aided solution that perfectly knows the optimal beamforming vectors. Further, thanks to integrating deep learning with the coordinated transmission from multiple BSs, the developed solution ensures reliable coverage and low latency, resulting in a comprehensive framework to enable highly-mobile mmWave applications. Extensive simulations, based on accurate ray-tracing, were performed to evaluate the proposed solution in various LOS and NLOS environment.  These results indicated that the proposed solutions attains high data rate gains compared to coordinated beamforming strategies that do not leverage machine learning, especially in high-mobility large-array scenarios. The results also illustrated that with sufficient learning time, the deep learning model efficiently adapts to changing environment, yielding a robust beamforming system. From a practical perspective, the results illustrated that phase synchronization among the coordinated BSs is not necessary for learning how to accurately predict the beamforming vectors. The results in this paper encourage several future research directions such as the extension to multi-user systems, the investigation of time-varying scenarios, and the development of more sophisticated machine learning models for mmWave beamforming.

\bibliographystyle{IEEEtran}
% Generated by IEEEtran.bst, version: 1.14 (2015/08/26)

\end{document}